\newif\iffull
\newcommand{\calC}{{\ensuremath{\cal C}}}
\newcommand{\calF}{{\ensuremath{\cal F}}}
\newcommand{\calB}{{\ensuremath{\cal B}}}
\newcommand{\calT}{{\ensuremath{\cal T}}}
\renewcommand{\int}{\mathit{int}}
\newcommand{\skel}{\mathit{skel}}
\newcommand{\leaveout}[1]{}
\newcommand{\odd}[2]{{\ensuremath{\mathit{odd}(#1\setminus #2)}}}
\date{}
\title{Finding big matchings in planar graphs quickly}
\author{Therese Biedl
\thanks{David R.~Cheriton School of Computer
Science, University of Waterloo, Waterloo, Ontario N2L 1A2, Canada.
Supported by NSERC}
}
\title{Finding big matchings in planar graphs quickly}
\author{Therese Biedl
\thanks{Supported by NSERC. } 
}
\institute{David R.~Cheriton School of Computer
Science, University of Waterloo, Waterloo, Ontario N2L 1A2, Canada.
\email{biedl@uwaterloo.ca}
}
\begin{document}

\maketitle
\begin{abstract}
It is well-known that every $n$-vertex planar graph with minimum degree 3
has a matching of size at least $\frac{n}{3}$.  But all proofs of this 
use the Tutte-Berge-formula for the size of a maximum matching. 
Hence these proofs
are not directly algorithmic, and to find such a matching one must apply
a general-purposes maximum matching algorithm, which has run-time $O(n^{1.5}\alpha(n))$
for planar graphs.  In contrast to this, this paper gives a linear-time
algorithm that finds a matching of size at least $\frac{n}{3}$ 
in any planar graph with minimum degree 3.
\end{abstract}

\section{Introduction}

In 1979, Nishizeki and Baybars proved the following result 
on matchings in planar graphs (detailed definitions are in
Section~\ref{sec:def}):

\begin{theorem}\cite{NB79}
\label{thm:main}
Every $n$-vertex simple planar graph with minimum degree 3 has a matching of size
at least $\frac{n}{3}$.
\end{theorem}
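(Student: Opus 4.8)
The plan is to derive the bound from the Tutte--Berge formula
$\nu(G) = \frac{1}{2}\bigl(n - \max_{U \subseteq V}(\odd{V}{U} - |U|)\bigr)$,
where $\nu(G)$ is the size of a maximum matching and $\odd{V}{U}$ counts the components of $G-U$ with an odd number of vertices. Since $\nu(G) \ge \frac{n}{3}$ is equivalent to $\max_U(\odd{V}{U}-|U|) \le \frac{n}{3}$, it suffices to fix a set $U$ attaining this maximum, put $u := |U|$, and bound $\odd{V}{U} - u$. I would split the odd components of $G-U$ into the $t$ of them that are single vertices and the $s$ of them that have at least three vertices, so that $\odd{V}{U} = t+s$.

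First I would record the trivial vertex count: $V$ is the disjoint union of $U$ with the vertex sets of the components of $G-U$, so, counting $1$ for each singleton odd component and at least $3$ for each larger one, $n \ge u + t + 3s$. Plugging this in, the desired inequality $t + s - u \le \frac{n}{3}$ follows once $3(t+s-u) \le u+t+3s$, i.e.\ once $t \le 2u$. Note that the $s$-terms, and hence all the larger odd components and all the even components, drop out of the accounting, so the whole problem reduces to bounding the number of ``isolated'' odd components in terms of $|U|$.

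The main step is thus to prove $t \le 2u$, and this is the only place planarity and minimum degree $3$ are used. Each singleton odd component is a vertex $v$ all of whose neighbors lie in $U$ (a neighbor outside $U$ would lie in $v$'s component), and since $G$ is simple with $\deg(v) \ge 3$, the vertex $v$ has at least three distinct neighbors in $U$; in particular $t \ge 1$ forces $u \ge 3$. Let $H$ be the bipartite graph with sides $U$ and the set of singleton components, keeping exactly the edges of $G$ between the two sides. Then $H$ is simple (being a subgraph of the simple graph $G$), bipartite, and planar, and each singleton-side vertex has degree $\ge 3$, so $3t \le |E(H)|$; on the other hand Euler's formula bounds the edges of any simple bipartite planar graph on $\ge 3$ vertices by $2|V(H)| - 4 = 2(u+t)-4$. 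Hence $3t \le 2u + 2t - 4$, giving $t \le 2u - 4 < 2u$, and the case $t=0$ makes $t \le 2u$ trivial.

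I expect the one point requiring care is the legitimate use of the bipartite-planar edge bound: checking the degenerate cases ($t = 0$, $U = \emptyset$, $G$ disconnected, $n$ small) and confirming that no multiple edges appear in $H$, which holds precisely because nothing is contracted and $H$ is an honest subgraph of $G$. Everything else is bookkeeping, and the real work is done by the Tutte--Berge formula, which is exactly why --- as the abstract observes --- this argument does not by itself yield an efficient matching algorithm.
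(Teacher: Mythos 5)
Your proof is correct, but it is a genuinely different route from the one the paper takes --- in fact it is essentially the original Nishizeki--Baybars argument that the paper's introduction summarizes and then deliberately avoids. Your reduction is sound: by Tutte--Berge it suffices to show $\mathit{odd}(G\setminus U)-|U|\le \frac{n}{3}$ for every $U$; writing $t$ for singleton odd components and $s$ for the larger ones, the count $n\ge u+t+3s$ reduces everything to $t\le 2u$, and the simple bipartite planar bound $|E(H)|\le 2(u+t)-4$ applied to the $U$--singleton incidence graph (each singleton having $\ge 3$ distinct neighbours in $U$ by simplicity and minimum degree $3$) gives $t\le 2u-4$ when $t\ge 1$; the degenerate cases you flag ($t=0$, $U=\emptyset$) are indeed harmless. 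The paper, by contrast, proves the theorem constructively: it decomposes $G$ via the block tree into $2$-connected components and those via the SPQR-tree into $3$-connected components, finds in each $3$-connected piece a binary spanning tree derived from a canonical ordering whose maximum matching already has the right size, and merges these matchings by forcing designated outer-face vertices to stay unmatched (inserting artificial edges to kill bridges and S-nodes whose pieces are too small). What each approach buys: yours is short, elementary, and even yields the sharper bound $\frac{n+2}{3}$ implicit in $t\le 2u-4$, but it is non-constructive --- extracting the matching then requires a general maximum-matching algorithm, which costs $O(n^{1.5}\alpha(n))$ on planar graphs; the paper's longer argument is precisely engineered so that every step (spanning trees, tree matchings, SPQR/block decompositions) runs in linear time, which is the entire contribution of the paper.
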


Their proof relies on the famous Tutte-Berge formula \cite{Berge1958}, which states
that the number of unmatched vertices in a maximum matching equals
the minimum of $\odd{G}{T}-|T|$, where the minimum is taken over
all vertex sets $T$, and $\odd{G}{T}$ is the number of components of
odd size in the graph $G\setminus T$.  Nishizeki and Baybars then argued,
by considering the planar bipartite graph of the edges between $T$ and $V-T$,
that $\odd{G}{T}-|T|\leq \frac{n}{3}$ for any $T$. This implies that the maximum matching has
size at least $\frac{n}{3}$.%
\footnote{They actually proved a bound of $\frac{n+2}{3}$, and generally
bounds of the form $\frac{n}{3}+c$
for some small constant $c$, where $c$ depends on the connectivity of the
graph and on $n$ being sufficiently big.  To keep the statement of 
results simpler, such constants will be omitted except where they are crucial
to make an induction work.}

For planar 3-connected graphs (which always have minimum degree 3), a second
independent proof of the matching-bound of $\frac{n}{3}$ was given in 2004 \cite{BDD+04}.
The proof again uses the Tutte-Berge formula, but
the approach to prove that $\odd{G}{T}-|T|\leq \frac{n}{3}$  is different:  
study the graph induced by $T$, and count 
the faces that have odd components inside. 

Both arguments have the drawback that they 
provide no insights for finding a matching of size at least $\frac{n}{3}$ efficiently.
One can find such a matching by using any algorithm that finds a maximum matching
in a graph. The first such algorithm was Edmond's famous blossom algorithm
\cite{Edmonds1965a}, which finds a maximum matching by repeatedly finding an
augmenting path to increase the matching by one edge.  This takes $O(m)$ time
per augmenting path, hence $O(mn)$ time overall, where $m$ is the 
number of edges of the graph.  The run-time was improved
by Micali and Vazirani \cite{MV80} (with corrections by Vazirani \cite{Vaz12}) 
to be $O(\sqrt{n}m\alpha(m,n))$ time, where $\alpha(m,n)$ is the inverse 
Ackerman function that grows very slowly, For planar graphs
this amounts to $O(n^{1.5}\alpha(n,n))$, which is not linear time.  
One can find an
$\varepsilon$-approximation of the  maximum matching in run-time $O(8^{1/\varepsilon} n)$
\cite{Baker94}, and there are various fast matching-algorithms for 
planar graphs 
in special situations (see e.g.~\cite{BBD+01,BKM+17,AV18} and the
references therein).  But to the author's knowledge there exists no prior algorithm
that can find the matching of Theorem~\ref{thm:main} in linear time.

This paper gives another proof for Theorem~\ref{thm:main}.
In contrast to the previous proofs, it does not use the Tutte-Berge formula.
Instead, for a 3-connected planar graph it uses a spanning tree of maximum degree 3,
which is known to exist and can be found in linear time.    Then it finds a
maximum matching in this spanning tree.
To find a large matching in graphs that are not 3-connected, split
it into components of higher connectivity, find matchings in them, and 
combine suitably.  (However, there are some pitfalls to this if the components
contain too few vertices.)
This proof naturally gives rise to an algorithm that runs in linear time.

\section{Definitions}
\label{sec:def}

Let $G=(V,E)$ be a graph with vertices $V$ and edges $E$, and
use $n:=|G|:=|V|$ as convenient shortcuts.
The {\em degree} $\deg_G(v)$ is the number of incident edges of
vertex $v$ in graph $G$; write $\deg(v)$ if the graph in question
is clear.  Graph $G$ has {\em minimum
degree $\delta$} if every vertex has at least $\delta$ incident edges.
$G$ is called {\em $k$-connected}
if removing any $k-1$ vertices leaves a connected graph.  
A {\em matching} $M$ in $G$ is a set of edges
that have no endpoint in common.  Call a vertex {\em matched} by
$M$ if it is an endpoint of an edge of $M$, and {\em unmatched} otherwise.
This paper frequently uses matchings where some vertices
are required to be unmatched.  Specifically, if three vertices
$\{x,y,z\}$ are fixed, then 
$M_A$ (for any set $A\subset \{x,y,z\}$) denotes a matching where
$\{x,y,z\}\setminus A$ must be unmatched.  Put differently, vertices
in $A$ are allowed to be used by the matching, while vertices
in $\{x,y,z\}\setminus A$ are forbidden to be used.

Throughout this paper the input graph $G$ is assumed to be {\em simple}, i.e., 
to have neither loops nor parallel edges.  (There are no non-trivial
matching-bounds for graphs of large minimum degree if loops or parallel edges 
are allowed.)

Throughout the paper the input graph $G$ is assumed to be
{\em planar}, i.e., can be drawn without a
crossing in the plane, and that one particular such drawing has
been fixed.    All subgraphs of $G$ are assumed to inherit this drawing.
A planar drawing $\Gamma$ defines the {\em faces}, which are the regions of
$\Bbb{R}^2\setminus \Gamma$ and are characterized by giving the
vertices and edges that are on its boundary.  The
{\em outer-face} is the infinite region of $\Bbb{R}^2\setminus \Gamma$.
An {\em interior vertex} is a vertex not on the outer-face.
A planar graph is called {\em triangulated} if all faces have three
vertices.

\section{Finding a matching}

The algorithm for Theorem~\ref{thm:main} proceeds by splitting the
graph into subgraphs of increasingly higher connectivity, finding
matchings in the subgraphs, and putting them together.  To ensure
that the combined matchings to not share an endpoint, 
some vertices on the outer-face are required to be unmatched.

\subsection{3-connected graphs}
\label{subse:3conn}

For 3-connected graphs, a matching that satisfies the bound of
Theorem~\ref{thm:main} can be found by putting some existing
tools together in a suitable way.

\begin{lemma}
\label{lem:main_3conn}
Let $G$ be a 3-connected planar graph where the outer-face contains a vertex $x$
and an edge $(y,z)$ with $y\neq x\neq z$.  Then $G$ has 
matchings of the following sizes:
\begin{enumerate}
\item \label{it:Mx}
a matching $M_{x}$ of size at least $\frac{n-3}{3}$ where $y,z$ are 
	unmatched, and
\item \label{it:Mempty} a matching $M_{\emptyset}$ of size at least $\frac{n-4}{3}$ where $x,y,z$
	are unmatched.
\end{enumerate}
Furthermore, these matchings can be found in linear time.
\end{lemma}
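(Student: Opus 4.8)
The plan is to exploit the two tools the introduction promised: a spanning tree $T$ of maximum degree $3$ in a $3$-connected planar graph (which exists by a theorem of Barnette and can be found in linear time), and the fact that a maximum matching of a tree can be found greedily in linear time. First I would recall that a tree on $N$ vertices that has maximum degree $3$ has a maximum matching of size at least $\tfrac{N-1}{3}$: indeed, in a maximum matching of any tree the unmatched vertices form an independent set, and in a subtree spanning only unmatched vertices and their matched neighbours each matching edge can be "charged" its two endpoints plus at most one unmatched vertex, because a matched vertex has degree at most $3$ and so at most one unmatched neighbour that could be charged to it — a careful version of this counting gives the $\tfrac{N-1}{3}$ bound. (This is exactly the bound one needs, since $N=n$ here.) So a first, rough version of the lemma — a matching of size $\tfrac{n-1}{3}$ with \emph{no} constraints — follows immediately by taking a maximum-degree-$3$ spanning tree and a maximum matching in it.

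The real content is the constrained versions. For item~\ref{it:Mx} I want a matching of size $\tfrac{n-3}{3}$ that avoids both $y$ and $z$; for item~\ref{it:Mempty} one of size $\tfrac{n-4}{3}$ avoiding all of $x,y,z$. The idea is to first delete the forbidden vertices from $G$ and then argue that the remainder still has a suitable spanning tree of maximum degree $3$. Deleting $y$ and $z$ (which are adjacent and on the outer-face) from a $3$-connected planar graph leaves a connected planar graph on $n-2$ vertices; a maximum-degree-$3$ spanning tree of it, if one existed, would give a matching of size $\tfrac{(n-2)-1}{3}=\tfrac{n-3}{3}$ automatically avoiding $y,z$, which is exactly item~\ref{it:Mx}. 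Similarly, after deleting $x,y,z$ we have $n-3$ vertices, and a maximum-degree-$3$ spanning tree there would give $\tfrac{(n-3)-1}{3}=\tfrac{n-4}{3}$, which is item~\ref{it:Mempty}. So the plan reduces to: show $G-\{y,z\}$ and $G-\{x,y,z\}$ each admit a spanning tree of maximum degree $3$, constructible in linear time.

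The main obstacle is precisely this last point, because Barnette's theorem needs $3$-connectivity, and $G-\{y,z\}$ and $G-\{x,y,z\}$ need not be $3$-connected — deleting two or three vertices from a $3$-connected graph can drop the connectivity substantially. I expect the fix is to \emph{not} delete the vertices but rather to run Barnette's construction on $G$ itself and then control the degrees of $x,y,z$ in the tree, or to use a more robust statement: it is known that a $3$-connected planar graph has a spanning tree of maximum degree $3$ in which one \emph{prescribed} edge of the outer-face is a tree edge, and possibly one can prescribe a little more. Concretely, I would aim to find a maximum-degree-$3$ spanning tree $T$ of $G$ in which $x$ is a leaf and the edge $(y,z)$ is \emph{not} in $T$ and both $y,z$ are leaves of $T$; then a maximum matching of $T-x$, or of the forest obtained by removing the leaves $x,y,z$, plus a short argument matching the leaves' neighbours back in, yields the constrained matchings. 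The leaf-prescription version of Barnette's theorem (or an easy modification of its proof, which builds the tree incrementally along a planar embedding) is what I would lean on; verifying that this modification preserves linear running time, and that the degree bound survives, is the delicate step, and handling the small additive constants ($-3$ and $-4$ rather than $-1$) is what makes the leaf bookkeeping necessary rather than cosmetic.
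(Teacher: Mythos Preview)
Your approach is exactly the paper's: find a binary (max-degree-$3$) spanning tree of $G$ with controlled degrees at $x,y,z$, delete the forbidden vertices so that the remainder is still a tree of maximum degree $3$, and apply the $\tfrac{N-1}{3}$ matching bound for such trees. The paper makes your ``delicate step'' concrete by using the binary spanning tree derived from a canonical ordering with $v_1=y$, $v_2=z$, $v_n=x$, which guarantees $\deg_T(x)=\deg_T(y)=1$, $\deg_T(z)=2$, and $(y,z)\in T$; so $z$ is not a leaf as you speculated, but removing the leaf $y$ turns $z$ into a leaf, hence $T\setminus\{y,z\}$ and $T\setminus\{x,y,z\}$ are genuine trees on $n-2$ and $n-3$ vertices and the bound applies directly with no extra bookkeeping.
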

\begin{proof}
Crucial for the proof is the concept of a {\em binary spanning tree}, which is a spanning tree 
$T$ such that $\deg_T(v)\leq 3$  for all vertices $v$.  (This is also known
in the literature as a {\em 3-tree}, but the term `binary spanning tree' is preferable
since `3-tree' is also used for maximum graphs of treewidth 3.)

Barnette \cite{Bar66} showed that every 3-connected planar graph has a
binary spanning tree, and it can be found---using a variety of methods---in
linear time \cite{Strothmann-PhD,Bie-Barnette,BiedlKindermann}.  Some
of these binary spanning trees have additional properties, and the one
most useful for the proof is the one found with the algorithm by the author
\cite{Bie-Barnette}. This binary spanning is derived
from a so-called canonical ordering $\langle v_1,v_2,\dots,v_n\rangle$ of
the vertices, and has the property that $\deg_T(v_1)=\deg_T(v_n)=1$, 
$\deg_T(v_2)=2$, and edge $(v_1,v_2)\in T$.    Since the
canonical ordering can be chosen such that $v_1=y$, $v_2=z$ and $v_n=x$ \cite{Kant96},
therefore $\deg_T(y)=1=\deg(x)$ and $\deg_T(z)=2$ and $(y,z)\in T$.
This implies that $T_\emptyset := T\setminus \{x,y,z\}$ is also a tree.
See Figure~\ref{fig:3tree}(a).

\begin{figure}[ht]
\hspace*{\fill}
\subcaptionbox{}{\includegraphics[width=0.6\linewidth,page=1]{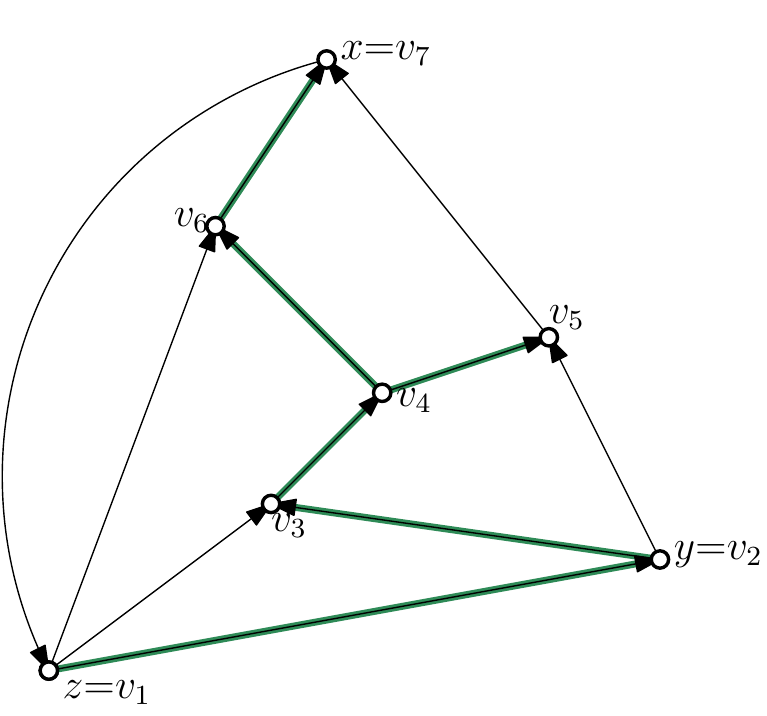}}
\hspace*{\fill}
\begin{minipage}[b]{0.25\linewidth}
\subcaptionbox{}{\includegraphics[width=0.992\linewidth,page=3]{spanningBinaryTree.pdf}}

\medskip
\subcaptionbox{}{\includegraphics[width=0.99\textwidth,page=4,trim=0 0 0 0,clip]{spanningBinaryTree.pdf}}
\end{minipage}
\hspace*{\fill}
\caption{(a) A planar 3-connected graph with a canonical ordering \cite{Kant96}
and the binary spanning tree obtained from it with the method in \cite{Bie-Barnette}.  
Any matching $M_\emptyset$ in it has size $\leq \frac{n-4}{3}$.
(b) The matching-bound $|M_x|\geq \frac{n-2}{x}$ is tight.
(c) A planar 3-connected graph with $n=6$ vertices has a matching $M_x$
of size 2.
}
\label{fig:3tree}
\end{figure}

The second ingredient to the proof is the result that every connected graph with
minimum degree 3 has a matching of size at least $\frac{n-1}{3}$ 
\cite{BDD+04}.  It is not known how to find such a matching in time faster
that any general-purpose maximum matching algorithm in arbitrary graphs, but 
the result is used here only in tree $T_\emptyset$.  Tree $T_\emptyset$ is connected 
and has maximum degree 3, and so it contains a matching $M_\emptyset$ of size at least
$\frac{n(T_\emptyset)-1}{3} = \frac{n-4}{3}$.  Also, in a tree a maximum
matching can be found in linear time with a straightforward dynamic programming
algorithm (see also \cite{Baker94}), so $M_\emptyset$ can be found in linear time.

This proves (\ref{it:Mempty}), and (\ref{it:Mx}) is proved in the same way by using tree
$T_{x}:=T\setminus \{y,z\}$ instead.
\qed
\end{proof}


The bound of Lemma~\ref{lem:main_3conn}(\ref{it:Mempty}) is tight, see
for example Figure~\ref{fig:3tree}(a).  The bound for
Lemma~\ref{lem:main_3conn}(\ref{it:Mx}), on the other hand, can be improved
slightly, to $\frac{n-2}{3}$.  Such a minor improvement could be deemed
not be worth the effort, but is of {\em vital} importance when it
it comes to merging 3-connected components of a 2-connected graph

\begin{lemma}
\label{lem:improvement}
Let $G$ be a 3-connected
planar graph where the outer-face contains a vertex $x$
and an edge $(y,z)$ with $y\neq x\neq z$.  If $n\geq 4$,  then $G$ has 
a matching $M_{x}$ of size at least $\frac{n-2}{3}$ where $y$ and $z$ are
unmatched.
It can be found in linear time.
\end{lemma}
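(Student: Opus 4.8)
The plan is to recycle the construction from the proof of Lemma~\ref{lem:main_3conn}: take the binary spanning tree $T$ with $\deg_T(y)=1$, $\deg_T(z)=2$, $(y,z)\in T$, $\deg_T(x)=1$, and set $T_x:=T\setminus\{y,z\}$, a tree on $n-2$ vertices of maximum degree $3$. A maximum matching $M$ of $T_x$, computable in linear time by tree dynamic programming, has $|M|=\nu(T_x)\ge\lceil\frac{n-3}{3}\rceil$. Checking the three residues of $n$ modulo $3$ shows this already gives $|M|\ge\frac{n-2}{3}$ \emph{unless} $n\equiv 0\pmod 3$ and $\nu(T_x)=\frac{n-3}{3}$; so assume from now on we are in this one bad case. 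Then $T_x$ is \emph{extremal}: a connected maximum-degree-$3$ tree on $N:=n-2\equiv 1\pmod 3$ vertices whose matching number attains the smallest possible value $\frac{N-1}{3}$, so that every maximum matching of $T_x$ misses exactly $\frac{N+2}{3}=\frac n3$ (at least two) vertices. What remains is to produce one extra matching edge lying inside $G\setminus\{y,z\}$ (so $y,z$ stay unmatched), raising the size to $\frac{n-3}{3}+1=\frac n3\ge\frac{n-2}{3}$.

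A useful preliminary fact is that an extremal tree has no leaf whose unique neighbour has degree $\le 2$: deleting such a leaf and its neighbour removes two vertices, decreases the (tree) matching number by exactly $1$, and — since $N-2\equiv 2\pmod 3$ — leaves a tree of matching number $\ge\frac{N-1}{3}$, whence $\nu(T_x)\ge\frac{N-1}{3}+1$, a contradiction. In particular the leaf $x$ of $T_x$ (I treat the degenerate subcase in which $x$'s $T$-neighbour is $z$, making $x$ isolated in $T_x$, separately and by a parallel argument) has a neighbour $w$ with $\deg_{T_x}(w)=\deg_T(w)=3$. Now look at $T_\emptyset:=T_x\setminus\{x\}$, where $\deg_{T_\emptyset}(w)=2$: since $\nu(T_\emptyset)=\nu(T_x)$ (both equal $\frac{n-3}{3}$) and attaching a pendant at an inessential vertex would have raised the matching number, $w$ must be essential in $T_\emptyset$. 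Pick a maximum matching $M_\emptyset$ of $T_\emptyset$ in which $w$ is matched to a neighbour $w'$ that $M_\emptyset$ could otherwise leave unmatched (such a $w'$ exists exactly because $w$ is essential); swapping $ww'$ for $xw$ turns this into a matching of $G$ of the same size in which $x$ is matched but $w'$, together with all $\frac{n-3}{3}$ vertices missed by $M_\emptyset$, is unmatched. I would now try to grow it by one edge: if two of these unmatched vertices are $G$-adjacent, add that (non-tree) edge; more generally, search for an $M_\emptyset$-augmenting path inside $G\setminus\{y,z\}$. Either way one gets a matching of size $\frac n3$, and everything runs in linear time.

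The step I expect to be the main obstacle is proving that this last search cannot fail — that an extremal maximum-degree-$3$ tree $T_x$ cannot sit inside a $3$-connected planar graph $G$ with no edge joining two simultaneously-missable vertices and no short augmenting path available. Both hypotheses on $G$ must be used: $3$-connectivity forces every leaf of $T_x$ (an extremal tree has at least $\frac{N+2}{3}\ge 2$ of them) to acquire at least two further $G$-edges beyond its tree edge, while planarity tightly restricts where all these edges can be routed. I would attack this by first pinning down the (quite rigid) structure of extremal maximum-degree-$3$ trees, then choosing the maximum matching of $T_x$ so as to place its missed vertices favourably, and finally showing that, in the absence of the desired edge or augmenting path, the many forced $G$-edges must carry a $K_{3,3}$-minor through $G$, contradicting planarity. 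Making this bookkeeping clean, together with the parallel treatment of the degenerate case where $x$ is isolated in $T_x$, is where essentially all the work of the lemma lies.
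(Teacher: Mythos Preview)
Your algorithmic scheme --- compute the tree matching of size at least $\lceil\frac{n-3}{3}\rceil$ and, if that falls one short, run a single augmenting-path search in $G\setminus\{y,z\}$ --- is exactly what the paper does. The divergence is entirely in the \emph{existence} proof, and there your proposal has a real gap.

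The paper does not analyse extremal trees at all. For $n\not\equiv 0\pmod 3$ integrality already gives $\lceil\frac{n-3}{3}\rceil\ge\frac{n-2}{3}$; for $n\ge 8$ the paper simply cites the known bound (from \cite{BDD+04}, itself proved via Tutte--Berge) that every $3$-connected planar graph has a matching of size at least $\frac{n+4}{3}$, and deleting the at most two edges touching $y,z$ leaves $\frac{n-2}{3}$. The single remaining case $n=6$ is dispatched by a short explicit argument ending in a $K_{3,3}$ contradiction. So the paper's existence proof is two citations plus a small case check.

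Your route instead commits you to proving that whenever $T_x$ is an extremal max-degree-$3$ tree on $n-2\equiv 1\pmod 3$ vertices inside a $3$-connected planar $G$, the tree matching admits an augmenting path in $G\setminus\{y,z\}$. You yourself flag this as ``where essentially all the work of the lemma lies'' and supply only a plan (pin down extremal-tree structure, place missed vertices well, force a $K_{3,3}$-minor if no helpful edge appears). That plan is plausible but is not a proof: nothing you wrote actually forces two missed vertices to be $G$-adjacent or an augmenting path to exist, and the claimed rigidity of extremal trees is asserted rather than established. The intermediate manoeuvre with $w$, $w'$ and the swap $ww'\mapsto xw$ is correct but does not advance matters --- it just relocates one unmatched vertex without producing the needed extra edge.

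The paper does remark, after its own proof, that one \emph{can} avoid both the Tutte--Berge citation and the blossom round by ``finding one extra edge via a counting argument,'' which is the direction you are heading; but it omits the details and notes that the resulting construction requires a Tutte path and ``is also not simple.'' So your approach is the harder of the two the paper contemplates, and the hard part is precisely the part you left undone.
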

\begin{proof}  
Note that it suffices to prove the {\em existence} of such a matching;
to find it in linear time use matching $M_x'$ of size at least $\frac{n-3}{3}$
that exists by Lemma~\ref{lem:main_3conn}(\ref{it:Mx}) and then (if needed) run one round of
Edmond's blossom algorithm to increase its size by 1.  This gives the matching
in linear time.

To show that matching $M_x$ exists, consider cases.  First, if $n\not\equiv 3\bmod 3$,
then by integrality $|M_x'|\geq \lceil \frac{n-3}{3} \rceil \geq \frac{n-2}{3}$ and
the result holds.   Second, if $n\geq 8$, then it is known \cite{BDD+04} that $G$
has a matching $M$ of size at least $\frac{n+4}{3}$.  Removing from $M$ the (at most
two) edges that are incident to $y$ and $z$ hence gives a matching of size
at least $\frac{n-2}{3}$.  By $n\geq 4$ this leaves only the case $n=6$.
One could now obtain the result by enumerating all possible 3-connected
graphs on $6$ vertices, but the following gives an explicit proof
(see Figure~\ref{fig:3tree}(c)).

Let $w_1,w_2,w_3$ be the three vertices of $G$ that are not $\{x,y,z\}$.
By Lemma~\ref{lem:main_3conn}(\ref{it:Mempty}) there is a matching
$M_\emptyset$ of size 1 among $\{w_1,w_2,w_3\}$, after renaming 
therefore $(w_1,w_2)$ is an edge.
If $(x,w_3)$ is an edge then $(w_1,w_2)\cup (w_3,x)$
is the desired matching, so assume that $(x,w_3)\not\in E$.  Since
the graph is 3-connected and $n=6$, vertex $x$ has
$\deg(x)\geq 3$, so $x$ is adjacent to at least one of $w_1,w_2$. After
renaming therefore assume that $(w_1,x)$ is an edge.
If $(w_2,w_3)\in E$, then $(x,w_1)\cup (w_2,w_3)$ is the desired matching, so 
assume $(w_2,w_3)\not\in E$.  By $\deg(w_3)\geq 3$ therefore $w_3$ must be
adjacent to $w_1,y,z$.  If $(x,w_2)$ is an edge, then $(x,w_2)\cup (w_1,w_3)$
is the desired matching, so assume $(x,w_2)\not\in E$.  By minimum degree 3
(and since neither $x$ nor $w_2$ is adjacent to $w_3$) therefore
$x$ and $w_2$ must both be adjacent to all of $w_1,y,z$.  But now there is
a complete bipartite graph $K_{3,3}$ with sides $\{w_1,y,z\}$ and $\{x,w_2,w_3\}$
in the planar graph $G$, a contradiction.
\qed
\end{proof}

The reader may be a bit disappointed by this proof, because it 
uses the Tutte-Berge formula (by appealing to the
result of Nishizeki and Baybars \cite{NB79}) and relies on Edmonds'
blossom algorithm \cite{Edmonds1965a}, if only for one round.
Both can be avoided by inspecting how
binary spanning trees can be obtained and finding one extra edge
via a counting argument.  (Details are omitted, but the appendix
has a similar proof for graphs with minimum degree 4.) Unfortunately
the resulting linear-time algorithm, while avoiding the blossom algorithm,
is also not simple since it needs to compute a so-called Tutte path.

\subsection{2-connected graphs}

For 2-connected graphs, it suffices to exclude two vertices on
the outer-face.

\begin{lemma}
\label{lem:main_2conn}
Let $G$ be a 2-connected planar graph with minimum degree 3
where the outer-face contains a vertex $x$
and an edge $(y,z)$ with $y\neq x\neq z$.  Then $G$ has 
a matching $M_{x}$ of size at least $\frac{n-2}{3}$ where $y,z$ are unmatched.
It can be found in linear time.
\end{lemma}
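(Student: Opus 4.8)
The plan is to reduce the 2-connected case to the 3-connected case (Lemma~\ref{lem:improvement}) by decomposing $G$ along its 3-connected components, which are organized in the usual way as an SPQR-tree-like structure obtained by repeatedly splitting at 2-cuts. First I would root this decomposition at a 3-connected piece (or a polygon/bond) that contains the edge $(y,z)$ on the outer-face, and process it top-down. For each piece $H$ in the decomposition, the two vertices of the 2-cut that separate $H$ from its parent play the role of a ``virtual edge''; I want to maintain the invariant that, for the piece currently being processed together with all descendants below it, I can produce a matching of size at least $\frac{(\text{number of vertices in that part})-2}{3}$ in which the two cut vertices toward the parent are unmatched. The top-level application of this invariant, using the cut vertices $y,z$, then yields exactly the matching $M_x$ claimed in the lemma.

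The key technical point is how to combine a child's matching with the parent's. Suppose piece $H$ has children attached along 2-cuts $\{a_1,b_1\},\dots,\{a_k,b_k\}$, and let $n_i$ be the number of vertices strictly inside the $i$-th child's part (not counting $a_i,b_i$). Inductively each child part, on $n_i+2$ vertices, gives a matching of size at least $\frac{(n_i+2)-2}{3}=\frac{n_i}{3}$ with $a_i,b_i$ unmatched; these child matchings are pairwise vertex-disjoint and disjoint from $H\setminus\{a_i,b_i\text{ over all }i\}$, so they can be taken together. It then remains to find a matching inside $H$ of size at least $\frac{n(H)-2}{3}$ avoiding the two cut vertices toward the parent, where additionally nothing is required of the $a_i,b_i$ (they may or may not be matched by $H$'s own matching, but since the child matchings already avoid them this is fine). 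If $H$ is a 3-connected piece, its virtual edges can be treated as real edges: $H$ with all virtual edges added is 3-connected and planar, has minimum degree $\geq 3$ (real vertices have degree $\geq 3$ in $G$, and the property is inherited), the outer-face contains $x$ and the virtual or real edge $(y,z)$, so Lemma~\ref{lem:improvement} applies and gives a matching of size $\geq\frac{n(H)-2}{3}$ with $y,z$ unmatched. Summing, the total matching has size at least $\frac{n(H)-2}{3}+\sum_i \frac{n_i}{3} \geq \frac{n-2}{3}$, since $n = n(H)+\sum_i n_i$ once we account for the cut vertices being shared.

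The main obstacle is the ``degenerate'' pieces: polygons (cycles, the $S$-nodes) and bonds (multi-edges between two vertices, the $P$-nodes), which are not 3-connected and to which Lemma~\ref{lem:improvement} does not apply. A bond contributes no real vertices and just reroutes which children hang off the same cut pair, so one can contract it and re-attach all its children directly. A polygon piece $H$ is a cycle $c_0,c_1,\dots,c_{\ell-1},c_0$ on $\ell$ vertices in which certain edges are virtual (each corresponding to a child) and the edge toward the parent sits between two designated consecutive vertices that must stay unmatched; here I would just pick a near-perfect matching of the path obtained by deleting those two vertices, which matches $\lfloor(\ell-2)/2\rfloor \geq \frac{\ell-2}{2} - \tfrac12 \geq \frac{\ell-2}{3}$ of them for $\ell \geq $ small constant, and verify the arithmetic carefully in the few small-$\ell$ cases (this is exactly where the strengthened bound $\frac{n-2}{3}$, rather than $\frac{n-3}{3}$, matters, and why Lemma~\ref{lem:improvement} was proved). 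The one genuinely delicate case is when a piece is very small — e.g.\ a polygon on $3$ or $4$ vertices, or a piece that after removing the two forbidden vertices has at most one vertex — because then the local matching may be empty and the $\frac{n-2}{3}$ bound must be carried entirely by the children's contributions; I would handle these by a short case analysis, possibly merging a tiny polygon with a neighboring piece before matching, and using that minimum degree $3$ forbids the smallest configurations. The linear running time is immediate: the decomposition into 3-connected components is computable in linear time, each piece is processed once, and within each piece we either call the linear-time routine of Lemma~\ref{lem:improvement} or run the linear-time tree/path dynamic program.
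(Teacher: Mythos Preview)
Your high-level strategy matches the paper's: SPQR-decompose, root at the piece containing $(y,z)$, handle R-nodes via Lemma~\ref{lem:improvement}, and treat S- and P-nodes specially. However, there is a genuine gap in how you combine the matchings.

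When you apply Lemma~\ref{lem:improvement} to an R-node skeleton $H$, the resulting matching may contain a \emph{virtual} edge $(a_i,b_i)$ --- an edge that does not exist in $G$. Your invariant only guarantees that the child's matching avoids $a_i$ and $b_i$; it does nothing to turn that virtual edge into a real one, so the union you form is not, in general, a matching of $G$. This is the central technical difficulty of the 2-connected reduction, and the paper handles it explicitly (its Case~3): if the matching so far uses a virtual parent-edge $(u_\mu,v_\mu)$ that is not in $G$, it deletes that edge (losing one matching edge), and then processes the child $\mu$ differently --- instead of forbidding both $u_\mu$ and $v_\mu$, it forbids $y_\mu\in\{u_\mu,v_\mu\}$ and an outer-face neighbour $z_\mu$ of $y_\mu$ in $\skel(\mu)$, applies Lemma~\ref{lem:improvement} there to get $\frac{|\skel(\mu)|-2}{3}$ edges, and then adds the real edge $(y_\mu,z_\mu)$ to recover the lost unit. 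Your sketch omits this bookkeeping entirely.

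Your treatment of S-nodes also has a hole: for a triangle S-node the path left after deleting the two parent-cut vertices has one vertex and contributes zero matching edges, below the $\tfrac{1}{3}$ you need (and even for longer polygons, your path matching may again pick up virtual edges). You gesture at ``merging a tiny polygon with a neighbouring piece''; the paper does exactly this as a preprocessing step, inserting one \emph{artificial} planar edge per S-node to fuse it with a child or grandchild into an R-node. Here, incidentally, is the only place minimum degree~3 is actually used: it forces at least one non-parent edge of the triangle to be virtual, so such a child exists. This in turn creates a new obligation --- the later matchings must avoid artificial edges --- which the paper discharges by noting that each artificial edge has one end on the parent-edge and the other strictly interior, so the choices of $x_\mu,y_\mu,z_\mu$ made when invoking Lemma~\ref{lem:improvement} can always steer clear of it.
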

\begin{proof}
The claim holds by Lemma~\ref{lem:improvement} if $G$ is 3-connected,
so assume that it is not.  The idea to find a $M_x$ is
fairly standard: compute the 3-connected components, find a large 
matching in them, and then merge the matchings.  However, two details 
are unusual.  First, the so-called S-nodes will create
difficulties because their matchings are not big enough.  Therefore
a pre-processing step (illustrated in Figure~\ref{fig:SPQR})
modifies the graph by inserting {\em artificial}
edges such that there are no S-nodes.  This creates a complication,
because the algorithm to find matching $M_x$ later must ensure 
that $M_x$ does not use artificial edges.  Second,
the matchings in the 3-connected components might well use so-called
virtual edges that do not exist in $G$; 
such virtual edges must be removed from the matchings and compensated
for by using bigger matchings in adjacent 3-connected components.
(This necessitates parsing 3-connected components in a particular
order, rather than simply putting the matchings together.)

\paragraph{SPQR-trees: }
Recall first the definition of 3-connected components and the
associated {\em SPQR-tree}, see for example \cite{DiBattista96b,GM00}.
Let $G$ be a graph that that is not 3-connected, and let $(u,v)$
be a {\em cutting pair}, i.e., $G-\{u,v\}$ splits into (possibly
many) connected components $C_1,\dots,C_k$.  The {\em cut-components}
of $(u,v)$ are the subgraphs $C_1^+,\dots,C_k^+$ obtained by taking
for $i=1,\dots,k$ the subgraph $C_i$ and adding to it the vertices $u,v$
and all edges from them to $C_i$.  Also add a {\em virtual edge} $(u,v)$
to subgraph $C_i^+$.  Finally, if $k\geq 3$ or if edge $(u,v)$ exists
in $G$, then define one more cut-component to consist of vertices $(u,v)$
with $k$ virtual edges $(u,v)$ (as well as the edge $(u,v)$ if it existed
in $G$).  Each cut-component that has at least four vertices is again
2-connected; repeat the process within them.     

The {\em 3-connected
components} of graph $G$ are the graphs obtained by repeating this process until
all resulting graphs are multiple edges or triangles or 3-connected simple planar graphs
with at least four vertices.  For each 3-connected component $C$, create a node $\mu$ and set
$\skel(\mu)=C$; note that $\skel(\mu)$ includes virtual edges.
Classify each node $\mu$ as follows:  $\mu$ is a {\em P-node} if
$\skel(\mu)$ has two vertices and at least three parallel edges, 
it is an {\em S-node} if $\skel(\mu)$ is a triangle,
and it is an {\em R-node} if $\skel(\mu)$ 
is a 3-connected simple graph with at least 4 vertices.%
\footnote{Sometimes S-nodes are combined to represent longer cycles.
This is not done here since
S-nodes need to be handled in a special way anyway to find big matchings.
In consequence the SPQR-tree as defined in this paper is not necessarily
unique; any SPQR-tree will serve to find the matching.}  Note that any
virtual edge was created as two copies, hence exists in two 3-connected components;
connect two nodes $\mu,\nu$ if they have a virtual edge in common.
One can argue that the result is a tree $\calT$, called {\em SPQR-tree}.  The
SPQR-tree and the 3-connected components can be found in linear time \cite{GM00}. 

Root the SPQR-tree $\calT$
at the node $\rho$ that contains the edge $(y,z)$ and set
this edge to be the {\em parent-edge} of root-node $\rho$.  For
all other nodes $\mu$ in $\calT$, the {\em parent-edge} $(u_\mu,v_\mu)$
is the unique virtual edge that $\mu$ has in common with its parent in $\calT$.

\paragraph{Eliminating S-nodes:}
As explained above, any S-node $\mu$ has to be eliminated by inserting artificial
edges, because the process later inserts a matching of $\skel(\mu)$ that does
not use the ends of the parent-edge, and in an S-node this would add no
matching-edge at all.

Process S-nodes in top-down order in $\calT$.  Let $\mu$ be an S-node that
has not yet been processed, and let 
$(u_\mu,v_\mu)$ be its parent edge and $a$ be the node $\neq u_\mu,v_\mu$
in the triangle $\skel(\mu)$.
At least one of the edges $(u_\mu,a)$ and $(v_\mu,a)$ is virtual,
else $\deg_G(a)=2$, contradicting that the minimum
degree is 3.%
\footnote{This is the only place where minimum degree 3 is used for
2-connected graphs; the result would hence also hold for any 2-connected
planar graph where any S-node of the SPQR-tree is adjacent to at least
two other nodes of the SPQR-tree. } 
    After possible renaming, edge $(v_\mu,a)$ is virtual in $\skel(\mu)$.
Let $\nu'$ be the child of $\mu$ that also contains virtual edge $(v_\mu,a)$.
Set $\nu=\nu'$ if $\nu'$ is an R-node or an S-node, and set $\nu$ to be
a child of $\nu'$ if $\nu'$ is a P-node.  (This exists since P-nodes
are never leaves of $\calT$.)  Since no two P-nodes are adjacent, 
regardless of the case now $\nu$ is not a P-node and the parent-edge
of $\nu$ is $(v_\mu,a)$.  Furthermore $\skel(\nu)$ is simple and 
has at least three vertices.
Consider one face  in $\skel(\nu)$
that is incident to its parent-edge $(v_\mu,a)$, and let $b\neq v_\mu,a$
be a vertex of $\skel(\nu)$ on this face.  

\begin{figure}[ht]
\hspace*{\fill}
\includegraphics[scale=0.7,page=1]{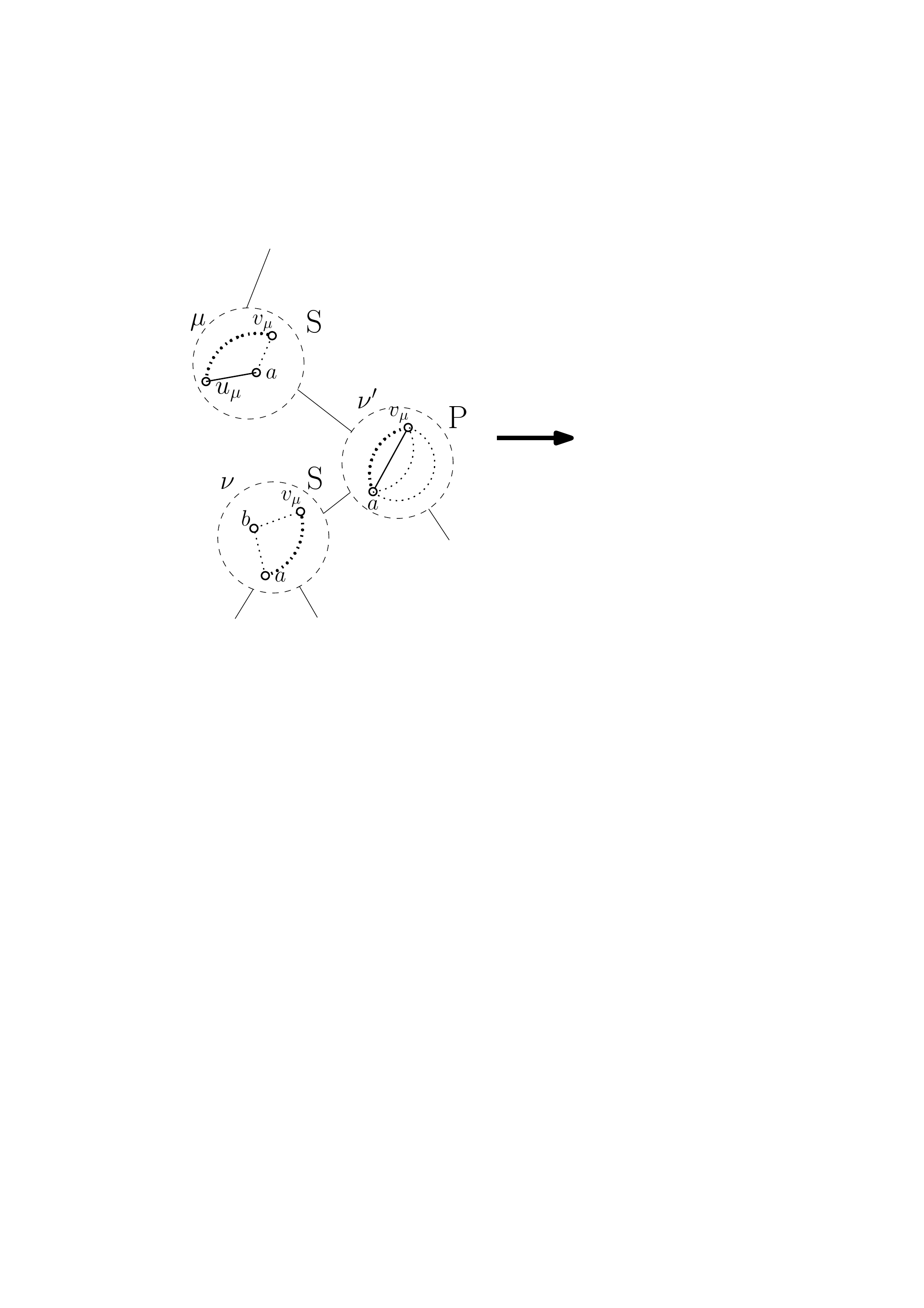}
\includegraphics[scale=0.7,page=2,trim=0 0 80 0,clip]{SPQR.pdf}
\hspace*{\fill}
\caption{Eliminating an S-node $\mu$ by inserting an edge that combines
$\mu$ with a child or grandchild $\nu$ to form an R-node.  Parent-edges
are thick dotted, other virtual edges are thin dotted, the artificial
edge is red (beaded).}
\label{fig:SPQR}
\end{figure}

Add an {\em artificial edge} $(u_\mu,b)$ to the graph.  
Note that one can embed
$\skel(\nu)$ inside $\skel(\mu)$ such that $b$ is on
a face incident to $(u_\mu,v_\mu)$, therefore the
resulting graph is planar.  (This requires possibly
permuting the order of virtual edges on P-nodes between
$\mu$ and $\nu$ and/or reversing the embedding of $\nu$.) 
Note that with this $\mu$ and $\nu$ becomes one R-node,
whose children are the remaining children
(if any) of $\mu$, all children of $\nu$, and (if $\nu'$ was
a P-node) the remaining children of $\nu'$ (connected via a P-node
if $\nu'$ had three or more children).
These changes are local to 
$\calT$, and therefore can be executed in constant time per S-node.

Observe that if an R-node $\mu'$ in the
resulting SPQR-tree has an artificial edge $e$ in $\skel(\mu')$,
then one end of $e$ belongs to the parent-edge of $\mu'$
while the other is {\em not} on the parent-edge.  In particular,
$\skel(\mu')$ has 
a planar embedding such that the parent-edge is
on the outer-face while the other end of $e$ is interior;
this will be important later to avoid adding an artificial
edge to the matching.

\paragraph{Finding the matching:}
Let $G^+$ be the graph obtained after all S-nodes have been
eliminated by the insertion of artificial edges.  Let
$\calT^+$ be its SPQR-tree, with the root-node $\rho$
containing $(y,z)$.  It will be helpful to expand $\calT^+$ with a new
root-node $\rho'$ that stores two copies of $(y,z)$ (one real and one
virtual) and to make $(y,z)$ virtual in $\rho$ (which becomes the child
of $\rho'$).

Now 
compute a matching $M_x$ while {\em re-assembling} $G^+$ from $\calT^+$.
This means the following process: 
Initially let the {\em parsed} subgraph $G_p$ be $\skel(\rho')$, i.e.,
the double edge $(y,z)$.
Then, while $G_p$ contains virtual edges,
take one virtual edge $(u,v)$ and let $\mu$ be the node of $\calT^+$
for which this is the parent-edge.
Add to $G_p$ the graph $\skel(\mu)$ , and delete from $G_p$
the two copies of the virtual edge $(u,v)$.   During this process the
goal is to maintain a matching $M_x$ of $G_p$
that satisfies the following:
\begin{itemize}
\item $|M_x|\geq \frac{|G_p|-2}{3}$.
\item Vertices $y$ and $z$ are unmatched in $M_x$.
\item $M_x$ uses no artificial edge.
\item $M_x$ uses only virtual edges for which at least one cut-component
	has not yet been re-inserted.
\end{itemize}

For the initial $G_p$ (consisting of $y,z$ and two edges between them),
set $M_x$ to be the empty set and observe that all conditions hold.
Now consider re-inserting $\skel(\mu)$ for some node
$\mu$ of $\calT^+$, and let $(u_\mu,v_\mu)$ be its parent-edge.
All ancestors of $\mu$ have already been re-inserted, and in particular 
vertices $u_\mu$ and $v_\mu$ are parsed already.  Furthermore, $\mu$ is a P-node or an
R-node since S-nodes have been eliminated.  Expand $M_x$ with 
a matching $M_\mu$ inside $\skel(\mu)$ that is found in one of three
possible ways:

\medskip\noindent{\bf Case 1:} $\mu$ is a P-node.

See node $\mu_1$ in Figure~\ref{fig:SPQR_reassemble}.
In this case, do not add to matching $M_x$ at all.  Note that
no new vertices were parsed, so again $|M_x|\geq \frac{|G_p|-2}{3}$.
The removal of the virtual edge $(u_\mu,v_\mu)$ of $\mu$ is not
a problem, even if it was used by $M_x$, because this edge is
the parent-edge of at least one child $\nu$ of $\mu$ (since $\mu$
is a P-node), and $\skel(\nu)$ has not yet been re-inserted.

\medskip\noindent{\bf Case 2:} $\mu$ is an R-node, and $(u_\mu,v_\mu)$ was not used by $M_x$ and/or exists in $G$.

See node $\mu_2$ in Figure~\ref{fig:SPQR_reassemble}.
To find matching $M_\mu$, let $x_\mu\neq u_\mu,v_\mu$ be an arbitrary vertex
on the outer-face of $\skel(\mu)$; this exists since $\mu$ is an R-node.
Using Lemma~\ref{lem:improvement}, find a matching $M_\mu:=
M_{x_\mu}(\skel(\mu))$ that leaves $u_\mu,v_\mu$ unmatched and that
has size at least $\frac{|\skel(\mu)|-2}{3}$.  

Since $M_\mu$ leaves the ends of the parent-edge unmatched, it does not
use the artificial edge that may exist inside $\skel(\mu)$, and it does
not use an edge incident to $y$ or $z$ (since
these belong to the root-node, they are either not in
$\skel(\mu)$ or are an end of the parent-edge of $\mu$).
All endpoints of edges in $M_\mu$ 
are in $\skel(\mu)\setminus \{u_\mu,v_\mu)$, hence newly
parsed, hence not matched by $M_x$.  So $M_x\cup M_\mu$
is a matching of the required size.  It does not use
the virtual edge $(u_\mu,v_\mu)$ by case assumption, so it
satisfies all required conditions.

\medskip\noindent{\bf Case 3:} $\mu$ is an R-node, and $(u_\mu,v_\mu)\in M_x$ is not an edge of $G$.

See node $\mu_3$ in Figure~\ref{fig:SPQR_reassemble}.
Since the virtual edge $(u_\mu,v_\mu)$ will be removed from the current graph,
it must also be removed from $M_x$;  therefore  matching
$M_\mu$ in $\skel(\mu)$ must compensate for this removed edge and must be bigger 
than in the previous case.  
Note first that $(u_\mu,v_\mu)\in M_x$ implies
that neither of its endpoints is $y$ or $z$, 
so $y,z$ are not vertices of $\skel(\mu)$
and the second condition on $M_x$ holds. 
There may or may not be an artificial edge in $\skel(\mu)$;
if there is one then exactly one of its ends is in $\{u_\mu,v_\mu\}$.  Set
$y_\mu\in \{u_\mu,v_\mu\}$ to be this end (if there is an artificial edge)
or an arbitrary vertex in $\{u_\mu,v_\mu\}$ otherwise, and set $x_\mu$ to
be the other vertex in $\{u_\mu,v_\mu\}$.

Let $z_\mu$ be the neighbour of $y_\mu$ that is distinct from $x_\mu$ 
and on the outer-face of $\skel(\mu)$; this exists since $\mu$ is a R-node.
Set $M_\mu$ to be the matching $M_{x_\mu}(\skel(\mu))$ with respect
to these three vertices, i.e., $y_\mu$ and $z_\mu$ are unmatched in
$M_\mu$.  (In particular, the artificial edge has not been used.)
The artificial edge (if any) is also not edge $(y_\mu,z_\mu)$, since
one of its ends is interior to $\skel(\mu)$.  Now update
$M_x' := M_x \setminus (u_\mu,v_\mu) \cup M_\mu \cup (y_\mu,z_\mu)$
and verify all claims.  In particular, the size is at least
$\frac{|G_p|p-2}{3} - 1  + \frac{|\skel(\mu)|-2}{3} + 1$, which is
sufficiently large since $|\skel(\mu)|-2$ new vertices were parsed.

\begin{figure}[ht]
\hspace*{\fill}
\includegraphics[scale=0.56,page=1]{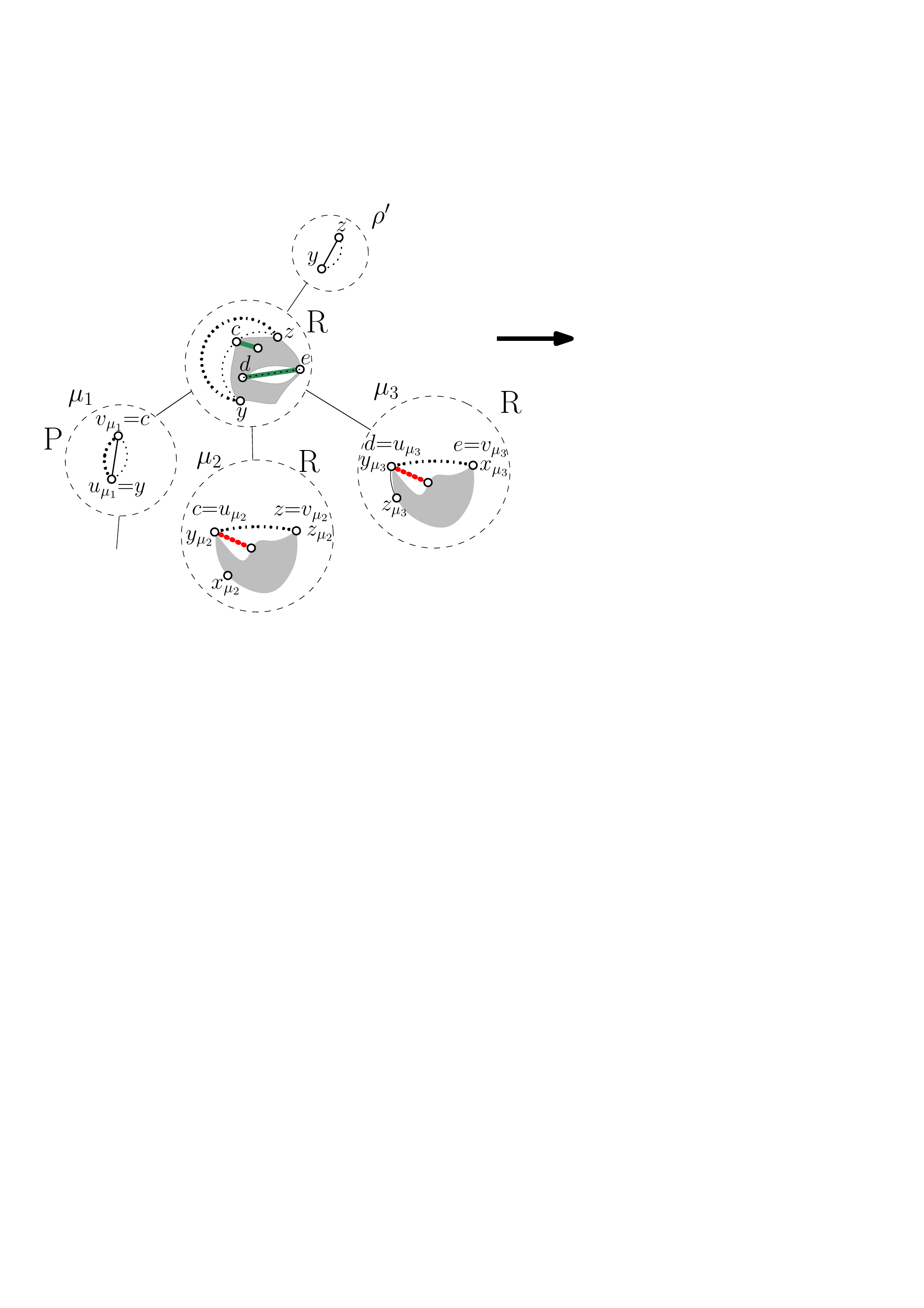}
\includegraphics[scale=0.56,page=2,trim=40 0 40 0,clip]{SPQR_reassemble.pdf}
\hspace*{\fill}
\caption{Re-assembling the graph while computing
a matching.}
\label{fig:SPQR_reassemble}
\end{figure}

Repeat the process until $G^+$ has been completely re-assembled.
The resulting matching $M_x^+$ is a matching of size at least $\frac{n-2}{3}$ 
where $y,z$ are unmatched and no artificial edges are
used.
Since the 3-connected components can be found in linear time,
and matching $M_x$ is obtained by computing a matching in
each 3-connected component, the overall run-time is linear.
\qed
\end{proof}

\subsection{Proof for connected graphs}

Theorem~\ref{thm:main} can now finally be proved by 
arguing how to handle cut-vertices;
this uses more or less the ``standard'' approach of splitting a graph into
2-connected components and putting matchings together, but again 
some artificial edges need to be inserted to ensure that 2-connected components
are sufficiently big.

\begin{proof} (of Theorem~\ref{thm:main})
Let $\calB$ be the {\em block-tree} of $G$, i.e., create a node 
for every {\em 2-connected component} (the maximal 2-connected subgraphs)
and a node for every {\em cut-vertex} (a vertex $v$ such that $G\setminus v$ is
not connected), and add
an edge from a cut-vertex $v$ to a 2-connected component $C$ if and only if
$v\in C$.   As the name suggests, $\calB$ is a tree, and it can be
computed in linear time \cite{Hopcroft73a}. Let $C_R$ be a leaf of $\calB$;
this is necessarily a 2-connected component with at least
four vertices by minimum degree 3.  Root the block-tree at $C_R$.
Let $(y,z)$ be an edge in $C_R$, and
after possible change of embedding, assume that $(y,z)$ is on the
outer-face of $G$.  
Declare $y$ to be the {\em parent-node} of $C_R$; for any 2-connected component
$C\neq C_R$ the {\em parent-node} is the cut-vertex stored at the parent of $C$ in the
block-tree.

\paragraph{Eliminating bridges:}
A {\em bridge} is a 2-connected component that contains a single edge $(u,v)$.
A pre-processing step (illustrated in Figure~\ref{fig:blocktree}(a-b))
eliminates these in top-down order in the block-tree $\calB$;
Let $C$ be a 2-connected component that contains only a bridge $(u,v)$.
By minimum degree 3 both $u$ and $v$ are cut-vertices, so after possible 
renaming $u$ is the parent-node of $C$ and
$v$ is a child of $C$. Let $C'$ be a child of $v$ and let $w$
be a neighbour of $v$; one can embed $C'$ such that $(v,w)$ is
on the outer-face of $C'$.  Insert an artificial edge $(u,w)$ and
note that the result is again planar.  Furthermore, this combines
$C$ and $C'$ into one 2-connected component whose parent-node is
again $u$; it inherits all
children of $C'$, and also inherits $v$ as child if there are
any other 2-connected components containing $v$.  This is again
a local change to the block-tree that can be done in overall linear time.
For future reference, note that any artificial edge
in a 2-connected component $C$ is adjacent to the parent-node
of $C$.

\begin{figure}[ht]
\hspace*{\fill}
\subcaptionbox{}{\includegraphics[scale=0.5,page=1]{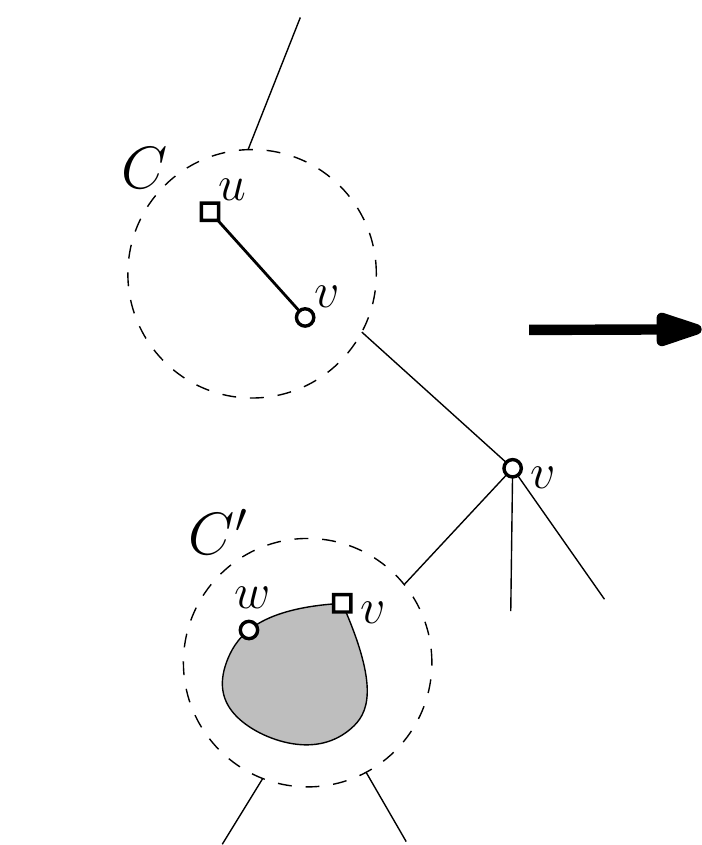}}
\subcaptionbox{}{\includegraphics[scale=0.5,page=2,trim=10 0 30 0,clip]{blocktree.pdf}}
\hspace*{\fill}
\subcaptionbox{}{\includegraphics[scale=0.5,page=3,trim=0 0 60 0,clip]{blocktree.pdf}}
\hspace*{\fill}
\caption{(a-b) Eliminating a bridge $C$ by inserting an edge that combines $C$
with a grandchild $C'$.  Parent-node are denoted by squares. (c) Finding a matching.}
\label{fig:blocktree}
\end{figure}

\paragraph{Finding the matching:}
Now, similar as for 2-connected graphs, re-assemble the graph 
from the 2-connected components while creating a matching.  
During this process compute a matching $M_x$ that satisfies the
following:
\begin{itemize}
\item $|M_x|\geq \frac{|G_p|-2}{3}$, where $G_p$ is the graph that
	has been parsed so far.
\item Vertices $y$ and $z$ are unmatched in $M_x$.
\item $M_x$ uses no artificial edge.
\end{itemize}

Initialize $M_x$ with
a matching in $C_R$ that leaves $y$ and $z$ unmatched.
Since $C_R$ has at least four vertices, 
this has size at least $\frac{|C_R|-2}{3}$ by Lemma~\ref{lem:main_2conn}.
This uses no artificial edge since such an edge is necessarily adjacent
to $y$ or $z$.  So all conditions are satisfied.

Now consider some 2-connected component $C$, and let $y_C$
be its parent-node;  $|C|\geq 3$ since there
are no bridges.  All ancestors of $C$ in $\calB^+$ have already
been re-inserted, and in particular $y_C$ has been parsed.
Expand $M_x$ with a matching $M_C$ inside $C$ that is 
found as follows
(see also Figure~\ref{fig:blocktree}(c)):
Temporarily add a vertex $z_C$ in the outer-face of $C$,
and connect it to $y_C$ and to two other vertices on
the outer-face of $C$ in such a way that edge $(y_C,z_C)$
is on the outer-face of the resulting graph $C^+$.
Let $x_C\neq y_C,z_C$ be an arbitrary vertex on the
outer-face of $C^+$.  Since $|C^+|\geq 4$ and $C^+$
is 2-connected, use Lemma~\ref{lem:main_2conn}  to find a matching $M_C:=M_{x_C}(C^+)$
of $C^+$ where $y_C$ and $z_C$ are unmatched such that
$|M_C|\geq \frac{|C^+|-2}{3} = \frac{|C|-1}{3}$.
All vertices matched by $M_C$ are in $C\setminus \{y_C,z_C\}$, hence newly
parsed, hence not matched by $M_x$.
Since $|C|-1$ vertices were newly parsed, 
therefore $M_x':=M_x\cup M_C$ is sufficiently big.  
Finally, since $y_C$ is unmatched in $M_x$, 
the artificial edge (if any)
in $C$ is not used by $M_C$.  So all conditions hold.

Repeat the process until $G^+$ has been completely re-assembled.
The resulting matching $M_x$ is a matching of size at least $\frac{n-1}{3}$ 
in $G$ where $y,z$ are unmatched and no artificial edges are
used.  Finally add $(y,z)$ to this matching; this gives a matching
in $G$ of size $\frac{n+1}{3}$ as desired.
Since the 2-connected components can be found in linear time,
and matching $M_x$ is obtained by computing a matching in
each 2-connected component, the overall run-time is linear.
\qed
\end{proof}

\section{Higher minimum degree?}

Nishizeki and Baybars \cite{NB79} proved larger matching-bounds
for higher minimum degree, in particular 
3-connected planar graphs have matchings of size
at least $\frac{3}{7}n$ if the minimum degree is 4.
Their proof uses the Tutte-Berge-formula, so naturally one wonders
whether these matchings can be found in linear time.
This remains an open problem.  Lemma~\ref{lem:mindeg4} below
states that a slightly smaller matching {\em can} be found in
linear time.  This result is not new:
Applying
Baker's approximation scheme \cite{Baker94} with 
$\varepsilon=\frac{1}{15}$ will gives a matching of size at least
$\left(1-\frac{1}{15}\right)\frac{3}{7}n=\frac{6}{15}n$ in time
$O(8^{15} n)$.
Since $8^{15}$ is a constant (albeit large), this is linear time.
The interesting part of Lemma~\ref{lem:mindeg4} is hence not the
result, but its proof; the hope is that with minor modifications
it could be used to find a matching of size $\frac{3}{7}n$  in
3-connected planar graphs with minimum degree 4.


\begin{lemma}
\label{lem:mindeg4}
Let $G$ be a triangulated planar graph with $n>3$
where the outer-face consists of $\{x,y,z\}$
and all interior vertices have degree at least 4.  Then $G$ has 
\begin{itemize}
\item a matching $M_{x}$ of size at least $\frac{6n-15}{15}$ where $y,z$ are
unmatched.
\item a matching $M_{\emptyset}$ of size at least $\frac{6n-21}{15}$ where $x,y,z$ are unmatched.
\end{itemize}
They can be found in linear time.
\end{lemma}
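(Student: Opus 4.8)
The plan is to mimic the structure used for minimum degree 3, replacing the ingredient ``every connected graph with minimum degree 3 has a matching of size $\tfrac{n-1}{3}$, applied to a binary spanning tree'' by a stronger statement tailored to the degree-4 situation. Concretely, I would first establish a tree-lemma: a tree $T$ in which all but three of its vertices have degree at least $2$ (equivalently, at most three leaves) has a maximum matching of size at least $\tfrac{6n(T)-15}{15}=\tfrac{2n(T)-5}{5}$, and the variant where the three designated leaves are additionally forbidden from being matched loses one more unit. This is the exact analogue of the role played by $T_\emptyset$ and $T_x$ in Lemma~\ref{lem:main_3conn}. The bound $\tfrac{2}{5}$ is what one should expect, since the worst case for matchings in trees with few leaves is a ``spider''/subdivided-path gadget and $\tfrac{2}{5}$ is the spanning-tree analogue of Nishizeki--Baybars' $\tfrac{3}{7}$ once one accounts for the degradation from passing to a spanning subgraph.

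Next I would produce the right spanning tree. Since $G$ is a triangulated planar graph with outer face $\{x,y,z\}$ and all interior vertices of degree $\ge 4$, it is in particular $3$-connected (a triangulation on $n>3$ vertices is), so by Barnette's theorem and the refinement of \cite{Bie-Barnette} it has a binary spanning tree $T$ obtained from a canonical ordering with $v_1=y$, $v_2=z$, $v_n=x$, so that $\deg_T(y)=\deg_T(x)=1$, $\deg_T(z)=2$, and $(y,z)\in T$. Then $T_x:=T\setminus\{y,z\}$ and $T_\emptyset:=T\setminus\{x,y,z\}$ are trees. The crucial new observation is that in $T_x$ (resp.\ $T_\emptyset$) every \emph{interior} vertex of $G$ that is not a designated leaf must have degree $\ge 2$: this needs an argument relating $\deg_T(v)$ to $\deg_G(v)\ge 4$. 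Here I would exploit that the binary spanning tree is built face-by-face along the canonical ordering, so each vertex has at least $\deg_G(v)-\deg_T(v)$ incident non-tree edges; this alone does not bound $\deg_T(v)$ from below, so instead I would argue structurally that $T$ can be chosen (or pruned) so that no interior vertex of $G$ is a leaf of $T$ other than possibly near $x$ — alternatively, I would pass from $T$ to a subtree or re-root to control the leaf set. The cleanest route is probably: count leaves of $T_\emptyset$; a leaf of $T_\emptyset$ is either a leaf of $T$ (only $x$, excluded) or a vertex all of whose remaining $T$-neighbours lie in $\{x,y,z\}$, and since $\deg_T(\cdot)\le 3$ and $x,y,z$ are on the triangular outer face, only a bounded number (at most three) such vertices exist. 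That bounded leaf count is exactly what the tree-lemma consumes.

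Finally, feed $T_x$ and $T_\emptyset$ into the tree-lemma: $n(T_\emptyset)=n-3$ gives $|M_\emptyset|\ge \tfrac{2(n-3)-5}{5}-(\text{loss for forbidding the 3 leaves})$, which I would arrange to land at $\tfrac{6n-21}{15}$, and $n(T_x)=n-2$ gives $|M_x|\ge\tfrac{6n-15}{15}$; a maximum matching in a tree is computable in linear time by the standard leaf-stripping dynamic program (as in \cite{Baker94}), and the canonical ordering and binary spanning tree are linear-time, so the whole thing is linear. The main obstacle I anticipate is the tree-lemma's tight constant together with the bookkeeping of the ``forbidden leaves'': proving $\tfrac{2n-5}{5}$ for maximum matchings in trees with $\le 3$ leaves is a delicate induction (strip a leaf-path, track how many of $n$, the leaf budget, and the forbidden set each operation consumes, and verify the inequality never degrades), and getting the additive constants to match $\tfrac{6n-15}{15}$ and $\tfrac{6n-21}{15}$ exactly — rather than off by a small constant — will require choosing the induction hypothesis (and possibly the base cases) with care. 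A secondary obstacle is pinning down why $T_x$ and $T_\emptyset$ genuinely have at most three leaves; if the $\langle\text{Bie-Barnette}\rangle$ tree does not literally guarantee this, I would either strengthen its stated properties or insert a short planarity/degree argument that a triangulation with a triangular outer face cannot have many vertices whose only neighbours ``survive'' into $\{x,y,z\}$.
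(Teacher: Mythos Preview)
Your approach has a fatal gap at precisely the point you label a ``secondary obstacle'': the binary spanning tree $T$ does \emph{not} have at most three leaves, and no choice of spanning tree can guarantee this in general.  In any tree of maximum degree~$3$ the number of leaves equals the number of degree-$3$ vertices plus~$2$, so ``at most three leaves'' means at most one branching vertex --- the tree is a Hamiltonian path with possibly one extra pendant path.  Triangulations with interior minimum degree~$4$ may have separating triangles and hence need not be $4$-connected, so Tutte's Hamiltonicity theorem does not apply and such a spanning spider need not exist.  The inequality $\deg_G(v)\ge 4$ puts no lower bound whatsoever on $\deg_T(v)$: a vertex of arbitrarily high $G$-degree can be a leaf of a spanning tree.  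The stated properties of the Bie-Barnette tree say that $x$ and $y$ \emph{are} leaves and $z$ has degree~$2$; they do \emph{not} say $x,y$ are the \emph{only} leaves, and in fact the tree typically has $\Theta(n)$ leaves.  Your sketch ``a leaf of $T_\emptyset$ is either a leaf of $T$ (only $x$, excluded) or \dots'' therefore fails at its first clause, and neither re-rooting (which does not change the unrooted leaf set) nor pruning (which discards vertices you must cover) can repair it.

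This is exactly why the paper abandons spanning trees for this lemma and instead uses a Tutte path $P$ from $x$ ending in the edge $(y,z)$ --- the standard relaxation of a Hamiltonian path that \emph{does} exist in any $2$-connected planar graph.  Each component $C$ of $G\setminus P$ sits inside a separating triangle $T_C$ with a representative $\sigma(C)\in P$; the paper recurses on $C^{+}=C\cup T_C$ to obtain $M_\emptyset(C^{+})$ and $M_{\sigma(C)}(C^{+})$, classifies $C$ as ``inward'' or ``outward'' according to whether $6|C|\bmod 15\in\{1,2,3\}$, and then walks along $P\setminus X$ in pairs $(v_{i-1},v_i)$, adding the path edge when both attached components are inward/empty and otherwise relying on the surplus from an outward component to compensate.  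The $\bmod\,15$ bookkeeping is precisely the device that makes the additive constants $-\frac{15}{15}$ and $-\frac{21}{15}$ come out; there is no tree-lemma in sight.
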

\begin{proof} (Sketch, details can be found in the appendix.)
Let $P$ be a {\em Tutte-path}, which is a path such that any 
component $C$ of $G\setminus P$ has only three neighbours on $P$,
hence resides inside a {\em separating triangle} $T_C$ (a triangle
of $G$ that is not a face).  Furthermore, each component $C$
has a {\em representative} $\sigma(C)$, which is a vertex in $P$
such that no two components use the same representative.  For
each component $C$, let $C^+$ be the graph induced by $T_C\cup C$.
Recursively compute $M_\emptyset(C^+)$ and $M_{\sigma(C)}(C^+)$.

Now take every other edge of $P$, say $(v_{i-1},v_i)$ is such 
an edge.  If neither $v_{i-1}$ nor $v_i$ is used as representative,
then add $(v_{i-1},v_i)$ to $M$; this adds one matching-edge for two
vertices.  If $v_{i-1}$ and/or $v_i$ is used as representatives, say 
for components $C_{i-1}$ and/or $C_i$, then add the matchings of these
components.  Evaluating the constants exactly, one can verify that this
adds $\frac{6(|C_i|+|C_{i-1}|+2)}{15}$ matching-edges for the
$|C_i|+|C_{i-1}|+2$ vertices in $C_i\cup C_{i-1}\cup \{v_{i-1},v_i\}$.  
Repeating this for every other
edge of $P$ gives a matching of size $\frac{6|V|}{15}-c$
(where the error term $-c$ occurs if $P$ has an even number of edges).
This error-term is a bit too big for the desired bound, 
but one can argue (by re-arranging which matchings are used) that
it can be decreased to be at most $-\frac{3}{15}$.

To ensure that $y,z,$ and (maybe) $x$ are unmatched, path $P$ must be
chosen so that it begins at $x$ and ends with $(y,z)$.
Then use $P\setminus \{y,z\}$ or $P\setminus\{x,y,z\}$ as path in
the above arguments  and the desired bound follows.
\qed
\end{proof}

\section{Outlook}

This paper considered the problem of finding large matchings in
a planar graph with minimum degree 3.  It was known that any such
graph has a matching of size at least $\frac{n+2}{3}$, but all 
previous proofs were non-constructive and relied on generic
maximum-matching-finding algorithms to construct such a matching.
This paper re-proves the result (for varying levels of connectivity),
and the proofs naturally lead to linear-time algorithms to find 
the matching.

As for open problems,
a number of other matching bounds have been proved using the Tutte-Berge
formula, sometimes for graph classes that are not even planar:
\begin{itemize}
\item 
	As mentioned earlier, every 3-connected planar graph has a
	matching of size at least $\frac{3}{7}n$ if $n\geq 30$ and
	the minimum degree is 4, and of 
	size at least $\frac{9}{19}n$ if $n\geq 78$ and
	the minimum degree is 5 \cite{NB79}.
\item 
	Any 3-connected
	planar graphs have a matching of size $(2n-\ell_4)/4$,
	where $\ell_4$ is the number of leaves in the tree $\calT_4$ of
	4-connected components \cite{BDD+04}.    
\item Every Delauney 
	triangulation has a 
	matching of size $\lceil (n-1)/2 \rceil$ \cite{Dill90}.
\item Every $\Theta_6$-graph has a matching of size at least
	$\frac{3}{7}n$ \cite{BBI+18}.
\item Every maximal simple 1-planar graph has a matching of
	size at least $\frac{2}{5}n$, and there are also
	matching bounds for 1-planar graphs of minimum degree 3,4, or 5 
	\cite{BW19}.
\end{itemize}

For all these results, the following question remains open:
Can such matchings be bound in linear time?

\bibliographystyle{plain}
\bibliography{journal,full,gd,papers}

\begin{thebibliography}{10}

\bibitem{AV18}
N.~Anari and V.~Vazirani.
\newblock Planar graph perfect matching is in {NC}.
\newblock In {\em Symposium on Foundations of Computer Science (FOCS 2018)},
  pages 650--661. {IEEE} Computer Society, 2018.

\bibitem{Baker94}
B.~Baker.
\newblock Approximation algorithms for {NP}-complete problems on planar graphs.
\newblock {\em J. ACM}, 41(1):153--180, 1994.

\bibitem{Bar66}
D.~W. Barnette.
\newblock Trees in polyhedral graphs.
\newblock {\em Canad. J. Math.}, 18:731--736, 1966.

\bibitem{DiBattista96b}
G.~Di Battista and R.~Tamassia.
\newblock On-line planarity testing.
\newblock {\em {SIAM} J. Computing}, 25(5), 1996.

\bibitem{Berge1958}
C.~Berge.
\newblock Sur le couplage maximum d'un graphe.
\newblock {\em Comptes Rendus de l'Acad\'{e}mie des Sciences, Paris},
  247:258--259, 1958.

\bibitem{Bie-Barnette}
T.~Biedl.
\newblock Trees and co-trees with bounded degrees in planar 3-connected graphs.
\newblock In {\em Scandinavian Symposium and Workshops on Algorithms Theory
  (SWAT'14)}, volume 8503 of {\em LNCS}, pages 62--73. Springer, 2014.

\bibitem{BBI+18}
T.~Biedl, A.~Biniaz\postdoc{}, V.~Irvine\postdoc{}, K.~Jain\student{},
  P.~Kindermann\postdoc{}, and A.~Lubiw.
\newblock Matching and blocking sets in {$\Theta_6$}-graphs.
\newblock 2019.
\newblock CoRR 1901.01476. To appear at EuroCG'19.

\bibitem{BDD+04}
T.~Biedl, E.~Demaine\student{}, C.~Duncan, R.~Fleischer, and S.~Kobourov.
\newblock Tight bounds on maximal and maximum matching.
\newblock {\em Discrete Mathematics}, 285(1-3):7--15, 2004.

\bibitem{BiedlKindermann}
T.~Biedl and P.~Kindermann\postdoc{}.
\newblock Finding {T}utte paths in linear time, 2018.
\newblock CoRR report 1812.04543. In submission.

\bibitem{BBD+01}
T.~Biedl\postdoc{}, P.~Bose, E.~Demaine\student{}, and A.~Lubiw.
\newblock Efficient algorithms for {P}etersen's theorem.
\newblock {\em J. Algorithms}, 38(1):110--134, 2001.

\bibitem{BKM+17}
G.~Borradaile, P.~Klein, S.~Mozes, Y.~Nussbaum, and C.~Wulff-Nilsen.
\newblock Multiple-source multiple-sink maximum flow in directed planar graphs
  in near-linear time.
\newblock {\em SIAM Journal on Computing}, 46(4):1280--1303, 2017.

\bibitem{CN89}
N.~Chiba and T.~Nishizeki.
\newblock The {H}amiltonian cycle problem is linear-time solvable for
  $4$-connected planar graphs.
\newblock {\em J. Algorithms}, 10(2):187--211, 1989.

\bibitem{Dill90}
M.~Dillencourt.
\newblock Toughness and {D}elaunay triangulations.
\newblock {\em Discrete {\&} Computational Geometry}, 5:575--601, 1990.

\bibitem{Edmonds1965a}
J.~Edmonds.
\newblock Paths, trees, and flowers.
\newblock {\em Canad. J. Math.}, 17:449--467, 1965.

\bibitem{GM00}
C.~Gutwenger and P.~Mutzel.
\newblock A linear time implementation of {SPQR}-trees.
\newblock In {\em Graph Drawing (GD 2000)}, volume 1984 of {\em Lecture Notes
  in Computer Science}, pages 77--90. Springer, 2000.

\bibitem{Hopcroft73a}
J.~E. Hopcroft and R.~E. Tarjan.
\newblock Efficient algorithms for graph manipulation.
\newblock {\em Communications of the {ACM}}, 16(6), June 1973.

\bibitem{Kant96}
G.~Kant.
\newblock Drawing planar graphs using the canonical ordering.
\newblock {\em Algorithmica}, 16:4--32, 1996.

\bibitem{Kant-IJCGA97}
G.~Kant.
\newblock A more compact visibility representation.
\newblock {\em Internat. J. Comput. Geom. Appl.}, 7(3):197--210, 1997.

\bibitem{MV80}
S.~Micali and V.~V. Vazirani.
\newblock An ${O(\sqrt{|V|} \cdot |E|})$ algorithm for finding maximum matching
  in general graphs.
\newblock In {\em Foundations of Computer Science (FOCS'80)}, pages 17--27.
  {IEEE} Computer Society, 1980.

\bibitem{NB79}
T.~Nishizeki and I.~Baybars.
\newblock Lower bounds on the cardinality of the maximum matchings of planar
  graphs.
\newblock {\em Discrete Mathematics}, 28(3):255--267, 1979.

\bibitem{Strothmann-PhD}
W.-B. Strothmann.
\newblock {\em Bounded-degree spanning trees}.
\newblock PhD thesis, {FB Mathematik/Informatik und Heinz-Nixdorf Institute,
  Universit\"{a}t-Gesamthochschule Paderborn}, 1997.

\bibitem{Tutte77}
W.~T. Tutte.
\newblock Bridges and {H}amiltonian circuits in planar graphs.
\newblock {\em Aequationes Math.}, 15(1):1--33, 1977.

\bibitem{Vaz12}
V.~Vazirani.
\newblock An improved definition of blossoms and a simpler proof of the {MV}
  matching algorithm.
\newblock {\em CoRR}, abs/1210.4594, 2012.

\bibitem{BW19}
J.~Wittnebel\student{}.
\newblock Bounds on maximum matchings in 1-planar graphs.
\newblock Master's thesis, University of Waterloo, January 2019.
\newblock \url{http://hdl.handle.net/10012/14445}.

\end{thebibliography}


\begin{appendix}
\section{Triangulated graphs with minimum degree 4}

This appendix gives a proof of Lemma~\ref{lem:mindeg4},
i.e., a triangulated graph with minimum degree 4 has a matching 
of size at least $\frac{6}{15}n$ that can be found in linear time.
More precisely, for $X=\{y,z\}$ or $X=\{x,y,z\}$ the goal is
to find a matching $M_X$ of size $\frac{6(n-|X|)-3}{15}$ where vertices
in $X$ are unmatched.

\paragraph{Base case:}

The proof proceeds by induction on the number of separating triangles.
In the base case, $G$ has no separating triangles, and therefore is
4-connected.  In turns out that the induction step given below can
also cover the base case, so we will not write the argument here.

\paragraph{Tutte path background:}
The matching is found using a so-called {\em Tutte-path} $P$, which is a generalization
of a Hamiltonian path, 
exists in 2-connected planar graphs \cite{Tutte77} and can be found
in linear time \cite{BiedlKindermann}.   The following statement holds
for all 3-connected planar graphs (even if not triangulated), and needs
the notion of a {\em separating triplet}, i.e., a set $T$ of three vertices
such that $G\setminus T$ is not connected.

\begin{theorem}[\cite{BiedlKindermann}]
\label{thm:BK}
Let $G$ be a 3-connected
planar graph where the outer-face contains a vertex $x$
and an edge $(y,z)$ with $y\neq x\neq z$.  Then $G$ has a path $P$ that
begins with $x$ and ends with edge $(y,z)$ such that the following holds
for the connected components $\calC$ of $G\setminus P$:
\begin{itemize}
\item For any $C\in \calC$, there exists a separating triplet 
	$T_C=\{x_C,y_C,z_C\}$ of $G$ such that $C$ is a component of $G\setminus T_C$
	and the vertices of $T_C$ are on $P$.
\item There exists an injective assignment $\sigma(C): \calC\rightarrow P$
	of {\em representatives} such that $\sigma(C)\in T_C \setminus 
	\{x,y,z\}$.\
\end{itemize}
Furthermore, $P$ and the representatives can be found in time
$O(\sum_{f\in \calF(P)} \deg(f))$, where $\calF(P)$ is the set of 
interior faces of $G$ that contain a vertex of $P$.
\end{theorem}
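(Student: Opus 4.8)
The plan is to prove this by strengthening the classical induction of Tutte and Thomassen for the existence of Tutte paths. Recall that a \emph{Tutte path} is a path $P$ in a $2$-connected plane graph such that every bridge of $P$ has at most three attachment vertices, and every bridge containing an outer-face edge has at most two. In our $3$-connected $G$ this immediately gives the first bullet: a component $C$ of $G\setminus P$ has at least three neighbours on $P$ (fewer would be a vertex cut of size at most two), and the Tutte condition forbids more than three, so $C$ has \emph{exactly} three neighbours; these three vertices are on $P$ and, since $C$ is a component of $G$ minus them, they form a separating triplet $T_C$. So the content beyond the known Tutte-path theorem is (i) forcing $P$ to start at $x$ and end with the edge $(y,z)$, (ii) producing the injective representative map, and (iii) the output-sensitive running time.

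First I would set up the recursion on the standard instance ``$2$-connected plane graph $H$, prescribed outer edge $e$, prescribed outer vertex $s$'': peel off the outer cycle $W$, form the Tutte-bridge decomposition of $W$, choose the next edge of the path and the bridges through which $P$ is routed, and recurse inside each routed bridge with its own outer edge and outer vertex supplied by the decomposition. The bridges that $P$ is \emph{not} routed through become, after intersecting with $G$, precisely the components $C$ of $G\setminus P$, and their attachment sets are the triplets $T_C$. Gluing the recursively produced sub-paths along shared attachment vertices yields $P$; a routine induction shows that with the right initial choice ($e=(y,z)$ at the outer face, $s=x$) the assembled path begins at $x$ and ends with $(y,z)$, the only care being that each recursive sub-instance's prescribed edge and vertex really lie on the relevant outer face, which the Tutte-bridge decomposition guarantees.

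The step I expect to be the main obstacle is the injective assignment $\sigma(C)\in T_C\setminus\{x,y,z\}$, which forces both that $T_C\setminus\{x,y,z\}$ is nonempty and that a system of distinct representatives exists. The key planar fact I would use is that in a $3$-connected planar graph any minimal separating triplet $T$ has at most two components in $G\setminus T$: each component is adjacent to all of $T$ by minimality, so contracting the components yields a $K_{3,k}$ minor with $k$ the number of components, forcing $k\le 2$. Hence each triplet is reused by at most two components, which already makes Hall's condition plausible; the remaining danger is a component trapped behind the triplet $\{x,y,z\}$ itself, or two components sharing a triplet $T$ with $|T\setminus\{x,y,z\}|\le 1$. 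I would rule these out by choosing $P$ \emph{extremally} among Tutte paths with the prescribed ends --- for instance maximizing the number of vertices on $P$, or taking a lexicographically outermost one --- so that any such trapped component could be absorbed into a longer or more outer path, a contradiction. With these degeneracies excluded, either a direct Hall argument (bounding, via the cyclic order of bridges around each $W$ in the recursion, how many components can share a pair of triplet vertices) or an explicit choice of $\sigma(C)$ at the recursion step where the bridge of $C$ is set aside (taking the attachment vertex of $C$ that is interior to the committed portion of $P$ and used by no sibling bridge) produces the injective map; I would develop whichever of these is cleaner in the details.

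Finally, for the running time I would observe that the recursion never inspects anything far from the output path: computing the Tutte-bridge decomposition of a current outer cycle $W$ is a local walk whose cost is charged to the interior faces of $G$ incident to $W$, and every such face is incident to a vertex that ends up on $P$. Summing over all recursive calls, each interior face of $\calF(P)$ is charged $O(\deg(f))$, giving the bound $O(\sum_{f\in\calF(P)}\deg(f))$; this matches the known linear-time Tutte-path machinery, to which the maintenance of the sets $T_C$ and the map $\sigma$ adds only $O(1)$ overhead per recursive call.
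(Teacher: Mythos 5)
First, note that the paper does not prove this statement at all: Theorem~\ref{thm:BK} is imported verbatim from \cite{BiedlKindermann} and used as a black box, so there is no in-paper proof to compare yours against. Your overall architecture does match the known route to such results (the Tutte/Thomassen recursion on a 2-connected plane graph with a prescribed outer edge and outer vertex, with the untouched bridges of $P$ becoming the components of $\calC$ and their attachment sets the triplets $T_C$), and your derivation of the first bullet from 3-connectivity plus the Tutte-path condition is correct.

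The genuine gap is exactly where you say you expect it: the injective assignment $\sigma(C)\in T_C\setminus\{x,y,z\}$. You offer two escape routes and commit to neither. The first --- choosing $P$ extremally (maximum number of vertices, or ``outermost'') --- is not compatible with the theorem you are proving: the running time $O(\sum_{f\in\calF(P)}\deg(f))$ is output-sensitive and can be sublinear, so you cannot afford any global optimization over Tutte paths, and you give no argument that an extremal path even exists with the prescribed endpoints while killing the degenerate triplets. The second --- assigning $\sigma(C)$ on the fly when the bridge of $C$ is set aside --- is the right idea and is what the cited machinery actually does, but it is the entire technical content of the theorem: one must verify that at every recursive step each discarded bridge has an attachment vertex that (i) lies on the already-committed part of $P$, (ii) is not one of the three globally forbidden vertices $x,y,z$, and (iii) is not claimed by any other bridge at any level of the recursion. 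None of these is checked in your sketch, and (ii) in particular requires care in the recursive calls adjacent to the endpoints, where the local prescribed vertices coincide with $x$, $y$, or $z$. Your planarity observation that a separating triplet cuts off at most two components makes Hall's condition ``plausible'' but does not establish it, since distinct components may have distinct triplets that overlap heavily in $P\setminus\{x,y,z\}$. As written, the proposal is a plan for a proof rather than a proof; the central claim it needs is asserted, not derived.
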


Assume for the rest of the proof 
that one such path $P$ has been fixed. 
Note that vertices in $X$ are the first or last vertices of $P$,
so $P\setminus X$ is also a path and can be enumerated as
$\langle v_0,v_1,v_2,\dots,v_k\rangle$.  See also
Figure~\ref{fig:attach_bridges}.  Observe that $n\geq 4$ implies $|P|\geq 4$,
otherwise any interior vertex is in some component $C\in \calC$,
but $\sigma(C)\in P\setminus \{x,y,z\}$, which is then impossible.
So $P\setminus X$ is non-empty and $k\geq 0$.  On the other hand,
$\calC$ may well be empty (e.g.~in the base case when there are
no separating triangles).

If $\calC$ is non-empty, then
for each $C\in \calC$ let $T_C$ be the separating triangle that encloses $C$,
and let $C^+$ be the  subgraph
induced by $C\cup T_C$.
Since $C^+$ is triangulated, has fewer separating triangle, and all
its interior vertices have degree 4 or more, one can obtain
matchings $M_\emptyset(C^+)$ and $M_{v}(C^+)$ for $v\in T_C$
recursively.    Classify $C$ as one of two types:  $C$ is
{\em outward} if $6|C|\%15\in \{1,2,3\}$ and {\em inward} otherwise,
where $\%$ denotes the modulo-operator.  Usually an
inward component $C$ will use $M_\emptyset(C^+)$ while an outward
component will use $M_\sigma(C)(C^+)$, but some exceptions will
be made to this below.

\paragraph{Pre-processing:}
Before computing the matching, the assignment $\sigma(\cdot)$ of representatives
should get changed so that, whenever possible, an even-indexed
vertex is used.  More precisely,
	assume that some $C\in \calC$ satisfies the following:
	$\sigma(C)$ is $v_{2s+1}$, 
	some other vertex of $T_C$ is $v_{2t}$, 
	and there is no $C'\in \calC$ with $\sigma(C')=v_{2t}$.  Then
	change the system of representatives and set $\sigma(C):=v_{2t}$.
	For example, in Figure~\ref{fig:attach_bridges}(a), one can reassign
	$C_6$ to use $v_6$, rather than $v_7$, as representative.

\begin{figure}[ht]
\subcaptionbox{}{\includegraphics[width=0.5\linewidth,page=1]{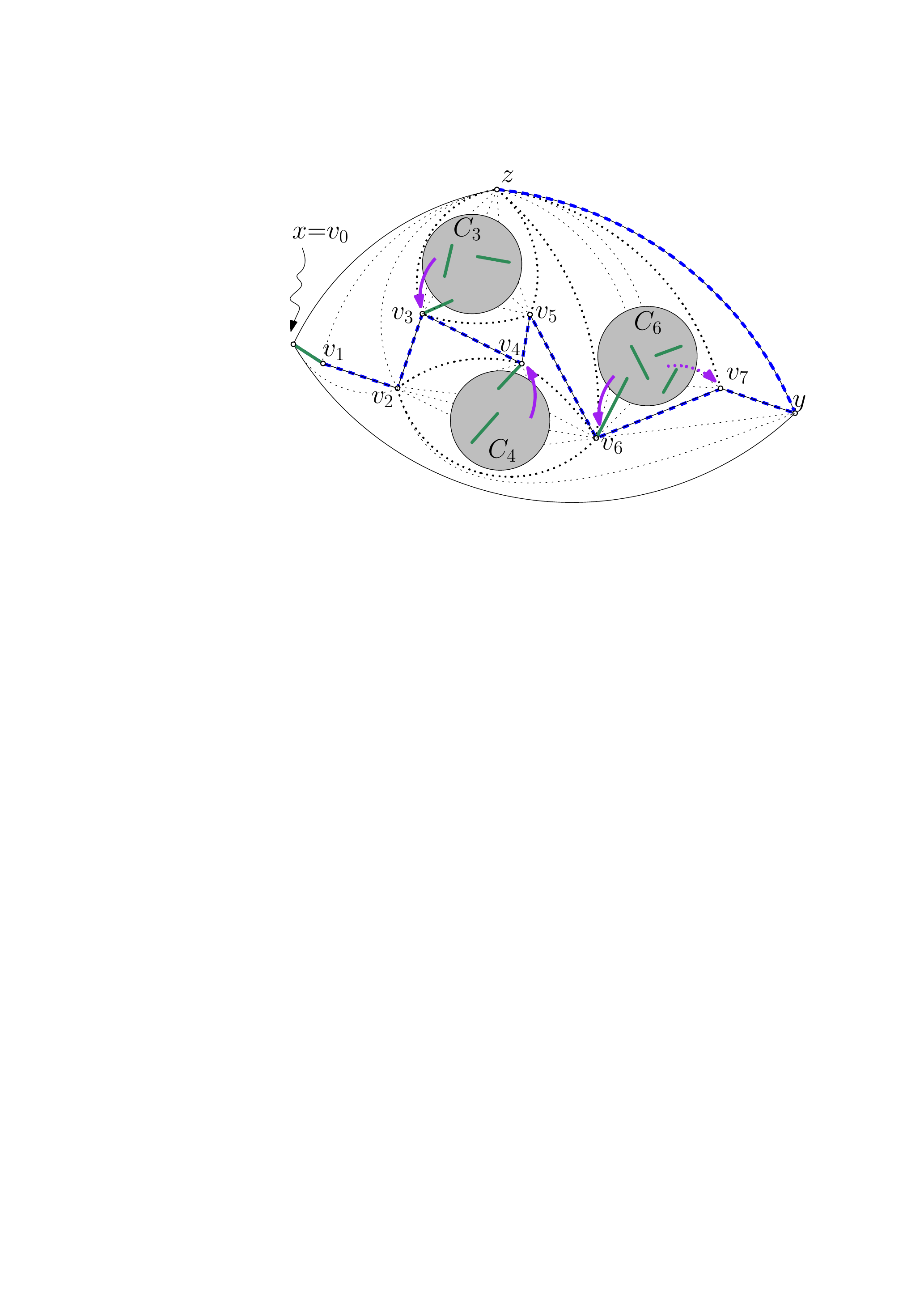}}
\subcaptionbox{}{\includegraphics[width=0.5\linewidth,page=2]{attach_bridges.pdf}}
\caption{Finding $M_x$. Path $P$ is blue dashed, the representatives are
purple arrows.  (a) $k$ is odd, so use
$M_{0,1}, M_{2,3}, M_{4,5}$ and $M_{6,7}$.
(b) $k$ is even and $C_2$ is non-empty, so use $M_{0,1}$,
$M_2$, $M_{3,4}$ and $M_{5,6}$. 
}
\label{fig:attach_bridges}
\end{figure}

\paragraph{Matchings $M_i$ and $M_{i-1,i}$:}
For $i=0,\dots,k$, let $C_i$ be the component in $\calC$ for which $\sigma(C_i)=v_i$.
If there is no such component then set $C_i:=\emptyset$ (an {\em empty} component).
Also set $V_i:=V(C_i)\cup \{v_i\}$.  Observe that $\bigcup_{i=0}^k V_i = V\setminus X$
since no component uses a vertex of $X\subseteq \{x,y,z\}$ as a representative.
Since the goal is to find a matching $M$ of size $\frac{6(n-|X|)-3}{15}$,
it suffices to find for $i=0,\dots,k$ a matching in $V_i$ of size $\frac{6|V_i|}{15}$.
This is not always possible (e.g. if $C_i$ is empty), but the idea is to achieve this
bound on average.

For $i=0,\dots,k$, define matching $M_i$ as follows.
If $C_i$ is empty then set $M_i:=\emptyset$ and observe that 
$|M_i|=0=\frac{6|V_i|-6}{15}$.
If $C_i$ is inward, then set $M_i:=M_\emptyset(C_i^+)$, and
observe that 
$|M_i|\geq \lceil \frac{6|C^+_i|-21}{15} \rceil = \lceil \frac{6|C_i|-3}{15} \rceil$
by integrality.  But $6|C_i|\% 15 \neq 1,2,3$ by definition of ``inward'',
so actually $|M_i|\geq \frac{6|C_i|}{15}=\frac{6|V_i|-6}{15}$.
Finally if $C_i$ is outward, then set $M_i:= M_{v_i}(C_i^+)$. 
By integrality $|M_i| \geq \lceil \frac{6|C_i^+|-15}{15} \rceil = 
\lceil \frac{6|C_i|+3}{15} \rceil$.  Since $6|C_i|\% 15 \in \{1,2,3\}$, hence
$\lceil \frac{6|C_i|+3}{15} \rceil \geq \frac{6|C_i|+12}{15}=\frac{6|V_i|+6}{15}$.  
Summarizing,
\mbox{for $i\in \{0,\dots,k\}$, } 
\begin{equation}
\label{equ:Msingle}
|M_i|\geq 
\left\{ \begin{array}{ll}
\frac{6|V_i|+6}{15} & \mbox{if $C_i$ is outward} \\[1.2ex]
\frac{6|V_i|-6}{15} & \mbox{if $C_i$ is inward or empty} 
\end{array} \right.
\end{equation}
Furthermore, $v_i$ is unmatched in $M_i$ unless $C_i$ is outward.

For $i=1,\dots,k$, define a matching $M_{i-1,i}$ within $V_{i-1}\cup V_i$ as
follows:  If both $C_{i-1}$ and $C_i$ are empty or inward, then set $M_{i-1,i}=(v_{i-1},v_i)
\cup M_{i-1} \cup M_{i}$; this is a matching since $v_{i-1}$ and $v_i$ are unmatched
in $M_{i-1}$ and $M_i$.  In this case,
$|M_{i-1,i}|\geq 1+
\frac{6|V_{i-1}|-6}{15}
+\frac{6|V_{i}|-6}{15}
= \frac{6(|V_{i-1}|+|V_i|)+3}{15}$.  Otherwise,
set $M_{i-1,i}=M_{i-1}\cup M_i$ and observe that this has size at least
$\frac{6|V_{i-1}|-6}{15}+\frac{6|V_i|+6}{15}$ since at least one of $C_{i-1}$ and $C_i$
is outward.%
\footnote{Here is where replacing $\frac{6}{15}$ by $\frac{3}{7}$ would not work out:
The ``surplus'' in $M_i$ if $C_i$ is outward must be high enough to make up
for the ``deficit'' in $M_{i-1}$ if $C_i$ is empty.}
So \mbox{for $i=1,\dots,k$}
\begin{equation}
\label{equ:Mdouble}
|M_{i-1,i}|\geq 
\left\{ \begin{array}{ll}
\frac{6(|V_{i-1}|+|V_i|)+3}{15} & \mbox{if neither $C_{i-1}$ nor $C_i$ is outward} \\[1.2ex]
\frac{6(|V_{i-1}|+|V_i|)}{15} & \mbox{otherwise}
\end{array} \right.
\end{equation}

\paragraph{Combining matchings:}

The following cases show how to put these matchings together to obtain
the desired bound:

\medskip

\begin{description}
\itemsep +2pt
\item[Case 1:] $k$ is odd.  Then use 
$ M = M_{0,1} \cup M_{2,3} \cup \dots \cup M_{k-1,k}.$
By Inequality (\ref{equ:Mdouble}) then $|M|\geq \sum_{i=0}^k \frac{6|V_i|}{15}=\frac{6(n-|X|)}{15}$ as desired.

\item[Case 2:] $k$ is even and $C_{2s}$ is non-empty for some $s=0,\dots,k/2$.  Then use
$$ M = M_{0,1} \cup \dots \cup M_{2s-2,2s-1} \cup M_{v_{2s}}(C^+_{2s}) \cup M_{2s+1,2s+2} \cup \dots \cup M_{k-1,k}.$$
Note that regardless of whether $C_{2s}$ is inward or outward,
we have $|M_{v_{s}}(C^+_{2s})|\geq \frac{6|C_{2s}^+|-15}{15} = \frac{6|V_{2s}|-3}{15}$.
Combining this with Inequality (\ref{equ:Mdouble}) applied to all other
parts of $M$ shows that $|M|\geq \sum_{i=0}^k \frac{6|V_i|}{15}-\frac{3}{15}=\frac{6(n-|X|)-3}{15}$ as desired.

\item[Case 3:] $k$ is even, and $C_{2s+1}$ and $C_{2s+2}$ are both
empty or inward for some $0\leq s \leq k/2-1$. Set
$$ M = M_{0,1} \cup \dots \cup M_{2s-2,2s-1} \cup M_{2s} \cup M_{2s+1,2s+2} \cup \dots \cup M_{k-1,k}.$$
(This is almost the same matching as in the previous case, but the analysis is different.)
The stronger bound in Inequality (\ref{equ:Mdouble}) applies to $M_{2s+1,2s+2}$, so
$$ |M_{2s}|+|M_{2s-2,2s-1}|\geq
\frac{6|V_{2s}|-6}{15}+\frac{6(|V_{2s+1}|+|V_{2s+2}|)+3}{15}.$$
Combining this with Inequality (\ref{equ:Mdouble}) applied to all other
parts of $M$ shows that $|M|\geq \sum_{i=0}^k \frac{6|V_i|}{15}-\frac{3}{15}=\frac{6(n-|X|)-3}{15}$ as desired.

\item[Case 4:] 
$k$ is even, and for some $0\leq s < t\leq k/2$ 
edge $(v_{2s},v_{2t})$ exists and $C_{2s}$ and $C_{2t}$ are empty.  Then use
\begin{eqnarray*}
M & = & M_{0,1} \cup \dots \cup M_{2s-2,2s-1} \quad \cup \quad 
	(v_{2s},v_{2t})  \cup M_{2s+1} \\
	& & \cup M_{2s+2,2s+3} \cup \dots \cup M_{2t-2,2t-1} \quad \cup \quad 
	M_{2t+1,2t+2} \cup \dots \cup M_{k-1,k}.
\end{eqnarray*}
See also Figure~\ref{fig:attach_bridges_2}(b).
Note that $(v_{2s},v_{2t}) \cup M_{2s+1}$ resides within $V_{2s}\cup V_{2s+1}\cup V_{2t}$, and 
$|V_{2s}|=|V_{2t}|=1$ since $C_{2s}$ and $C_{2t}$ are empty, so
$$1{+}|M_{2s+1}| \geq \frac{6(|V_{2s}|{+}|V_{2t}|){+}3}{15} {+} \frac{6|V_{2s+1}|-6}{15} = \frac{6(|V_{2s}| {+} |V_{2s+1}|{+}|V_{2t}|)-3}{15}.$$
Combining this with Inequality (\ref{equ:Mdouble}) applied to all
other parts of $M$ shows that $M$ has 
size at least $\sum_{i=0}^k \frac{6|V_i|}{15}-\frac{3}{15}=\frac{6(n-|X|)-3}{15}$ as desired.
\end{description}

\begin{figure}[ht]
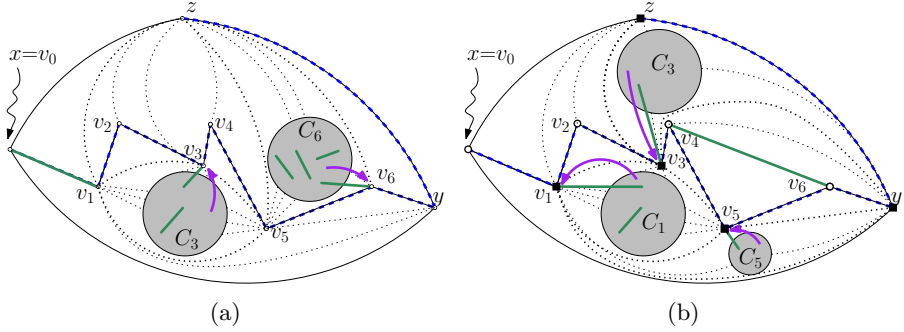

\hspace*{\fill}
\subcaptionbox{}{\includegraphics[width=0.48\linewidth,page=3]{attach_bridges.pdf}}
\hspace*{\fill}
\subcaptionbox{}{\includegraphics[width=0.48\linewidth,page=4]{attach_bridges.pdf}}
\hspace*{\fill}
\caption{More methods of obtaining $M$.
(a) $k$ is even and $C_0,C_1$ are empty, so
use $M_{0,1}\cup M_2 \cup M_{3,4} \cup M_{5,6}$.
(b) $k$ is even, edge $(v_4,v_6)$ exists and $C_4,C_6$ are empty,
so use $M_{0,1}\cup M_{2,3} \cup  (v_4,v_6) \cup M_5$.   
}
\label{fig:attach_bridges_2}
\end{figure}

\paragraph{Correctness:}
It remains to show that one of the above cases must apply.  Assume for contradiction 
that none of them applies, so in particular $k$ is even (by Case 1), 
$C_0,C_2,\dots,C_k$ are empty (by Case 2) and $C_1,C_3,\dots,C_{k-1}$
are outward (by Case 3 and since $C_0,C_2,\dots,C_k$ are empty).
Construct an auxiliary graph $H$ as follows (this step is only needed to argue 
correctness, not for the algorithm).  Vertex set $A$ (shown with
black squares in Figure~\ref{fig:attach_bridges_2}(b)) consists of 
$X$ and $v_1,v_3,\dots,v_{k-1}$,
a total of $\frac{k}{2}+|X|$ vertices.  Vertex set $B$ (shown with circles)
consists of $v_0,v_2,\dots,v_k$ and vertices $c_1,c_3,\dots,c_{k-1}$, where 
$c_i$ represents the (non-empty, outward) component $C_i\in \calC$.   
So $|B|=k+1$.     Note that $B$ includes $x=v_0$
if (as in Figure~\ref{fig:attach_bridges_2}(b)) the goal is to find matching
$M_x$.

Add the following edges to $H$.  First, add any edge from $A$ to $B$
that also existed in $G$.  Also connect $c_i$ (for $i=1,3,\dots,k-1$) to each
vertex of $T_{C_i}$; due to the pre-processing  vertices in $T_{C_i}$
are in $X$ or have an odd index, so $c_i$ has exactly three neighbours and
they are in $A$.    

By construction $H$ is bipartite and a minor of planar graph $G$. 
Since also $|H|\geq 3$, it has at most $2|H|-4=2(\frac{3}{2}k+|X|+1)-4 
= 3k+2|X|-2$ edges.  
Observe that $\deg_H(v_{2s})\geq 4$ for
all even-indexed vertices $v_{2s}$, because $v_{2s}$ has at least four
neighbours in $G$, none of these can be in a component $C_{2t}$ ($C_{2t}$
is empty), none of these can be in a component $C_{2t+1}$ 
(else $v_{2t}\in T_{C_{2t+1}}$, but by the pre-processing $T_{C_{2t+1}}\subseteq A$)
and none of them can be an even-index vertex $v_{2t}$
(else Case 4 applies).  So all neighbours of $v_{2s}$ in $G$ are in
$X\cup \{v_1,v_3,\dots,v_{k-1}\}=A$ and also neighbours in $H$.  It follows
that $H$ has at least $\sum_{v\in B} \deg(v)\geq 4(\frac{k}{2}+1)+3\frac{k}{2} = \frac{7}{2}k+4$
edges.  Thus $\frac{7}{2}k+4\leq 3k+2|X|-2$, which implies $|X|=3$ and $k=0$.
But then $|A|=3$, making $\deg_H(v_0)\geq 4$ impossible.  Contradiction,
so one of the cases must apply.

\paragraph{Run-time:}

Some algorithmic details remain to be filled in.
Before starting any recursions,
compute the tree $\calT_4$ of 4-connected components; this can be done in linear time
\cite{Kant-IJCGA97}.    Root $\calT_4$ at the component containing the
outer-face, and let every node store that size of the subgraph of its
descendants.  With this, one can later determine in constant time whether 
$6|C|\% 15 \in \{1,2,3\}$ for any $C\in \calC$, because $C$ is one component
of $G\setminus T_C$ for a separating triangle $T_C$ and hence corresponds to
a node of $\calT_4$.

Finding $P$ and the representatives $\sigma$
takes time $O(\sum_{f\in \calF(P)} \deg(f))$, where
$\calF(P)$ are the interior faces of $G$ that are incident to vertices of $P$     \cite{BiedlKindermann}.
Next, determine triangle $T_C$ for each $C\in \calC$ (this ``falls out of''
the algorithm to find $P$, but can also be re-computed by scanning the clockwise
order of edges around each vertex in $P$).  Then do the pre-processing to find
the final assignment of representatives; note that at this point the indices
of vertices in $P$ are fixed and so this can be done in $O(1)$ time per component.
Finally find the graph $G[P]$ induced by $P$ and check whether it has any edges
between two even-indexed vertices.  
Since one can also look up the type of each $C\in \calC$ in constant time,
one can hence determine which case applies in $O(|P|)$ time.
All these operations together take time
$O(|P|+\sum_{v\in P} \deg(v))$, which is $O(\sum_{f\in \calF(P)} \deg(f))$ and
can hence be viewed as overhead to computing $P$.

Now recursively compute for each $C\in \calC$ the required matching; 
this takes time $O(|\calF(C)|)$ time, where $\calF(C)$
is the set of interior faces in $C^+$.   Note that $\calF(C)$ and $\calF(P)$ overlap
in faces that are incident to $T_C$, but similarly as in \cite{BiedlKindermann}
it is possible to re-use the information from the computation for $P$ in the
computation of the path $P_C$ in $C^+$ (which is the first step in the
computation of the matching in $C^+$).  So overall this is linear time.
This finishes the proof of Lemma~\ref{lem:mindeg4}.
\end{appendix}

\end{document}